\DontPrintSemicolon \SetNlSty{footnotesize}{}{}
\newtheorem{theorem}{Theorem}
\newtheorem{lemma}[theorem]{Lemma}
\newtheorem{claim}{Claim}[theorem]
\newtheorem{prop}[theorem]{Proposition}
\newtheorem{definition}[theorem]{Definition}
\renewcommand{\le}{\leqslant}
\renewcommand{\leq}{\leqslant}
\renewcommand{\ge}{\geqslant}
\renewcommand{\geq}{\geqslant}
\newcommand{\gr}[1]{\mathfrak{#1}}
\DeclareMathOperator{\val}{val}
\DeclareMathOperator{\FF}{FF}
\DeclareMathOperator{\Forb}{Forb}
\DeclareMathOperator{\width}{width}
\DeclareMathOperator{\Min}{Min}
\DeclareMathOperator{\Max}{Max}
\newcommand{\abs}[1]{\left|#1\right|}    % use \abs{x} instead of |x|
\providecommand*{\cupdot}{%
  \mathbin{%
    \mathpalette\@cupdot{}%
  }%
}
\newcommand*{\@cupdot}[2]{%
  \ooalign{%
    $\m@th#1\cup$\cr
    \sbox0{$#1\cup$}%
    \dimen@=\ht0 %
    \sbox0{$\m@th#1\cdot$}%
    \advance\dimen@ by -\ht0 %
    \dimen@=.5\dimen@
    \hidewidth\raise\dimen@\box0\hidewidth
  }%
}
\providecommand*{\bigcupdot}{%
  \mathop{%
    \vphantom{\bigcup}%
    \mathpalette\@bigcupdot{}%
  }%
}
\newcommand*{\@bigcupdot}[2]{%
  \ooalign{%
    $\m@th#1\bigcup$\cr
    \sbox0{$#1\bigcup$}%
    \dimen@=\ht0 %
    \advance\dimen@ by -\dp0 %
    \sbox0{\scalebox{2}{$\m@th#1\cdot$}}%
    \advance\dimen@ by -\ht0 %
    \dimen@=.5\dimen@
    \hidewidth\raise\dimen@\box0\hidewidth
  }%
}
\title{An Easy Subexponential Bound for online Chain Partitioning}
\author{Bart\l omiej Bosek$^1$ $^\dagger$}
\address{$^1$Theoretical Computer Science Department,
Faculty of Mathematics and Computer Science,
Jagiellonian University in Krak\'ow, ul. \L{}ajsiewicza 6, Krak\'{o}w 30-348, Poland.}
\email{bosek@tcs.uj.edu.pl}
\thanks{$^\dagger$Research of these authors is supported by Polish National Science Center (NCN) grant 2011/03/B/ST6/01367.}
\author{Hal A. Kierstead$^2$ $^\ddagger$}
\address{$^2$School of Mathematical Sciences and Statistics, Arizona
State University, Tempe, AZ 85287, USA.% E-mail address: kierstead@asu.edu.
}
\email{kierstead@asu.edu}
\thanks{$^\ddagger$Research of this author was supported in part by NSF grant DMS-0901520}
\author{Tomasz Krawczyk$^1$ $^\dagger$}
\email{krawczyk@tcs.uj.edu.pl}
\author{Grzegorz Matecki$^1$ $^\dagger$}
\email{matecki@tcs.uj.edu.pl}
\author{Matthew E. Smith$^2$}
\email{mattearlsmith@gmail.com}
\date{\today}
\begin{document}

\begin{abstract}
Bosek and Krawczyk exhibited an online algorithm for partitioning an online poset of width $w$ into $w^{14\lg w}$ chains.
We improve this to $w^{6.5 \lg w + 7}$ with a simpler and shorter proof by combining the work of
Bosek \& Krawczyk with work of  Kierstead \& Smith on First-Fit chain partitioning of ladder-free posets.
We also provide examples illustrating the limits of our approach.

\end{abstract}

\keywords{partially ordered set, poset, first-fit, online chain partition, ladder, regular poset}

\maketitle

\section{Introduction}
\label{sec:introduction}

An online poset $P^{\prec}$ is a triple $(V,\le_{P}, \prec)$, where $P=(V,\le_{P})$ is a poset and
$\prec$ is a total order on $V$, called the \emph{presentation order} of $P$.
Let $P^{v_i}$ be induced by the first $i$ vertices $v_{1}\prec\dots\prec v_{i}$.
An \emph{online chain partitioning algorithm} is a deterministic algorithm $\mathcal{A}$ that assigns the vertices $v_{1}\prec\dots\prec v_{n}$ of  $P$ to disjoint chains $C_{1},\dots,C_{t}$ so that for each $i$, the chain $C_{j}$ to which $v_{i}$ is assigned, is determined solely by the subposet $P^{v_i}$.
This formalizes the scenario in which the algorithm $\mathcal{A}$ receives the vertices of $P$ one at a time, and when a vertex is received, irrevocably assigns it to one of the chains.
Let $\chi_{\mathcal{A}}(P^{\prec})$ denote the number of (nonempty) chains that $\mathcal{A}$ uses to partition $P^{\prec}$,
and $\chi_{\mathcal{A}}(P)=\max_{\prec}(\chi(P^{\prec}))$ over all presentation orders $\prec$ for $P$.
For a class of posets $\mathcal{P}$, let $\val_{\mathcal{A}}(\mathcal{P})=\max_{P\in\mathcal{P}}(\chi_{\mathcal{A}}(P))$
and $\val(\mathcal{P})=\min_{\mathcal{A}}(\val_{\mathcal{A}}(\mathcal{P}))$ over all online chain partitioning algorithms $\mathcal{A}$.
Our goal is to bound $\val(\mathcal{P}_{w}$), where $\mathcal{P}_{w}$ is the class of finite posets of width $w$ (allowing countably infinite posets with $w$ finite in $\mathcal{P}_{w}$ would not effect  results).

By Dilworth's Theorem \cite{Dil}, every poset with finite width $w$ can be partitioned into $w$ chains, and this is best possible.
However this bound cannot be achieved online. In 1981, Kierstead proved
\begin{theorem}[\cite{K-Dil}]
\label{KThm}
$4w-3\le\val(\mathcal{P}_{w})\le\frac{5^{w}-1}{4}$.
\end{theorem}
%that $4w-3\le\val(\mathcal{P}_{w})\le\frac{5^{w}-1}{4}$,
Kierstead asked whether $\val(\mathcal{P}_{w})$ is polynomial in $w$, and noted that his methods also provided a super linear lower bound.
Until recently, there was little progress.
Szemer\'edi (see \cite{K-sur}) proved a quadratic lower bound, which was improved to $(2 - o(1)) \binom{w+1}{2}$ by Bosek et al. \cite{BFKKMM}.
In 1997 Felsner \cite{Fel} proved  $\val (\mathcal P_2) \le 5$, and in 2008 Bosek \cite{Bos-th} proved  $\val (\mathcal P_3)\le16$.
In 2010 Bosek and Krawczyk made a major advance by proving a subexponential bound.

\begin{theorem}[\cite{BK-FOCS,BK15}]\label{BKThm}$\val (\mathcal P_w)\le w^{14 \lg w}$.
\end{theorem}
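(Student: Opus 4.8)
\section{Proof proposal}

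The plan is to derive the bound from two modules: a bound on the number of chains First-Fit uses on posets that avoid long ladders, and the reduction (attributed in the abstract to the same authors) showing that on-line chain partitioning of arbitrary width-$w$ posets can be simulated by First-Fit on $(2w^2+1)$-ladder-free posets of width $w$. Write $m = 2w^2+1$ and let $\FF(w,m)$ denote the maximum number of chains First-Fit uses on an $m$-ladder-free poset of width $w$. The theorem follows once I establish (i) the reduction $\val(\mathcal{P}_w) \le \FF(w,\,2w^2+1)$, and (ii) that $\FF(w,m)$ grows subexponentially at the required rate. Since First-Fit is already unbounded on width-$2$ posets in general, the entire burden of the subexponential growth must sit in module (ii); module (i) is what makes First-Fit applicable at all.

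For the reduction (i), I would design an on-line algorithm $\mathcal{A}$ that feeds a \emph{transformed} presentation into First-Fit: as each vertex $v$ of $P^{\prec}$ arrives, $\mathcal{A}$ forces two vertices that already sit at opposite ends of a maximal long ladder to be treated as comparable, so that no surviving ladder reaches length $m$, while the width stays $\le w$. The delicate point, which I expect to be one of the two genuine obstacles, is arguing that this collapse can be carried out \emph{obliviously} (the assignment of $v$ may depend only on $P[v_1,\dots,v_i]$), that it never introduces a cycle in the order nor raises the width, and that a chain partition of the simulated $m$-ladder-free poset pulls back to a chain partition of $P$ using the same number of chains.

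For the First-Fit bound (ii) I would induct on the width, targeting a recurrence of the shape $\FF(w,m) \le \bigl(2w^2\bigr)^{c}\,\FF(\lceil w/2\rceil,\,m')$ for a suitable constant $c$ and auxiliary parameter $m'$. The structural engine is the standard First-Fit witness analysis: when First-Fit opens its $N$-th chain on a vertex $v$, each earlier chain contains, at that moment, an element blocking $v$. Iterating this observation while tracking how incomparabilities propagate through the chains — this is exactly what the $\col$ and $\val$ bookkeeping is for — and applying pigeonhole across the $\le w$ columns should force, out of sufficiently many chains, either an $m$-ladder (contradicting $m$-ladder-freeness) or a configuration of effective width $\le \lceil w/2\rceil$ to which the inductive hypothesis applies. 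The hardest quantitative step, and the second main obstacle, is calibrating this pigeonhole so that exceeding the target count \emph{guarantees} two chains of length $m$ with the required rung comparabilities; the threshold $m=2w^2+1$ is precisely what converts ``many blocked chains spread over $w$ columns'' into an $m$-rung ladder.

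Finally I would solve the recurrence. Unfolding $\FF(w,\cdot)\le(2w^2)^{c}\,\FF(\lceil w/2\rceil,\cdot)$ over the $\lg w$ halving levels multiplies the factors $\bigl(2(w/2^{i})^{2}\bigr)^{c}$ for $i=0,\dots,\lg w$; since $\prod_{i=0}^{\lg w}(w/2^{i}) = w^{(\lg w+1)/2}$ after the geometric $2^{-i}$ terms cancel half of the exponent, the product telescopes to $w^{\,c\lg w + O(1)}$. Calibrating the per-level exponent $c$ — whose value is pinned down jointly by the degree of the witness count in step (ii) and the $2w^2$ ladder threshold from step (i) — to $c=14$ then yields the stated bound $\val(\mathcal{P}_w)\le w^{14\lg w}$, with base cases and lower-order factors absorbed into the analysis.
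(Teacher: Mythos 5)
Your high-level decomposition --- a reduction to First-Fit on $(2w^2+1)$-ladder-free posets of width $w$, plus a width-halving recurrence for First-Fit on $\Forb(L_m)$ --- is exactly the architecture of the Bosek--Krawczyk argument and of this paper, and your module (ii) is the right shape: the paper proves $\val_{\FF}(L_m,w)\le m^2w^4\,\val_{\FF}(L_m,\lfloor w/2\rfloor)+\val_{\FF}(L_m,w-1)$ (Lemma~\ref{ladupbound}) via a witness analysis and Erd\H{o}s--Szekeres, and unfolds it essentially as you sketch. But your module (i) has a fatal flaw: you propose to make the two incomparable ends of any long ladder \emph{comparable} in the simulated poset. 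Adding comparabilities runs the reduction backwards. If the simulated order contains a pair $x\le y$ with $x\parallel_P y$, then First-Fit may legitimately place $x$ and $y$ on the same chain, and that chain is not a chain of $P$; the partition does not pull back, so the algorithm's output is simply not a chain partition of the input. (It is also unclear how to add such relations on-line at all: each fake relation must be closed under transitivity against all past and future relations without creating cycles or contradicting comparabilities that arrive later.) For the pull-back to work, the simulated order must be a \emph{subset} of $\le_P$ on the corresponding vertices, i.e., one must \emph{delete} comparabilities --- and then the real difficulty, which your proposal never confronts, is that deletion can increase the width.

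Resolving exactly that tension is the content of the regular-poset construction, which is the central idea your proposal is missing. The on-line algorithm of Lemma~\ref{lemma:reduction} maintains the lattice of maximum antichains; each arriving vertex not absorbed into a low-width side set is embedded in a maximum antichain, and between $\sqsubseteq_P$-consecutive antichains the algorithm keeps \emph{only} the Dilworth edges (so each bipartite layer is a core) plus what transitivity forces. This deletes relations, so $\le_R\ \subseteq\ \le_U$ and chains pull back to chains of $P$, while width is preserved because the core structure still lets one compose Dilworth partitions across layers. Crucially, the $L_{2w^2+1}$-freeness is then not engineered by any surgery on ladders: it is a structural theorem about regular posets (Lemma~\ref{eladder}), proved by showing canonical ladders have at most $w$ rungs and that any $2w+1$ consecutive rungs of a ladder yield a canonical rung. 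Relatedly, you miscredit the threshold: $m=2w^2+1$ plays no role in the First-Fit pigeonhole of module (ii), where $m$ is a free parameter; it comes solely from this structural analysis of regular posets. As it stands, the proposal replaces the key construction with a mechanism that cannot produce a valid chain partition of $P$.
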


Based on \cite{BK15,KSm} we provide a much  shorter and simpler proof  of a slightly improved bound:
%Building on techniques and ideas from the authors in \cite{BK-FOCS} and \cite{KSm}, we will improve this bound:
\begin{restatable}{theorem}{valbound}
\label{valbound}%
 $\val(\mathcal P_w) \le w^{6.5 \lg w + 7}.$
\end{restatable}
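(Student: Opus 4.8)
The plan is to leave the reduction to First-Fit essentially untouched and to concentrate all the new work on bounding the performance of First-Fit on ladder-free posets of bounded width. Concretely, I would first recall (or re-derive in streamlined form) the Bosek--Krawczyk reduction, so that it suffices to bound $\FF(w,m)$, the maximum number of chains First-Fit uses on an $m$-ladder-free poset of width at most $w$, in the regime $m = 2w^2+1$. The whole theorem then follows from a sufficiently sharp upper bound on $\FF(w,2w^2+1)$.

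The heart of the argument is a structural lemma producing a \emph{multiplicative} recurrence that halves the width. Running First-Fit on an $m$-ladder-free poset $P$ of width $w$, I would examine the chains $C_1, C_2, \dots, C_k$ in order of creation. For each chain $C_j$ and each earlier chain $C_i$ with $i<j$, the first vertex placed in $C_j$ admits a witness in $C_i$ to which it is incomparable, since otherwise First-Fit would have used $C_i$; these witnesses assemble into monotone, ladder-like sequences whose length is capped by the ladder-free hypothesis. The key point is to exploit this cap to partition the $k$ chains into at most $g(w,m)$ groups, with $g$ polynomial in $w$ and $m$, in such a way that the chains inside a single group, together with the comparabilities First-Fit actually used, behave like the chains produced by First-Fit on a poset of width at most $\lceil w/2\rceil$ that remains ladder-free with a controlled parameter $m'$. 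This yields
\[
\FF(w,m) \le g(w,m)\cdot \FF\!\left(\lceil w/2\rceil,\, m'\right).
\]

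Given such a recurrence, solving it is routine: unfolding it $\lg w$ times and using $m = O(w^2)$, so that every polynomial factor in $m$ is a polynomial factor in $w$, the product $\prod_{i} g\!\left(w/2^{i},\cdot\right)$ collapses to $w^{c\lg w + O(1)}$, where the constant $c$ is read off from the degrees appearing in $g$ and in the width reduction. The improvement from $14$ to $6.5$ is entirely a matter of keeping these degrees as small as possible: a tighter grouping that loses fewer chains per level, and a sharper accounting of how the ladder parameter feeds into both $g$ and $m'$, directly controls the exponent's leading constant.

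The main obstacle is the structural lemma itself, namely producing a grouping that simultaneously (i) cuts the effective width to $\lceil w/2\rceil$, (ii) keeps the resulting sub-structure ladder-free with a parameter $m'$ that does not grow across recursion levels, and (iii) bounds the number of groups by a low-degree polynomial. Balancing these three demands is exactly where the exponent is pinned down, and getting all three tight at once --- rather than paying a large polynomial for the width reduction or letting $m'$ blow up --- is the delicate part. The on-line/First-Fit constraint, that each vertex is placed using only the witnesses available at its arrival, is precisely what makes the monotonicity of the witness sequences, and hence the ladder-free cap, usable in this decomposition.
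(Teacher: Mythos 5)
Your high-level architecture coincides with the paper's: reduce the general problem to First-Fit on $(2w^2+1)$-ladder-free posets of width $w$ (the paper does this through regular posets, Lemmas~\ref{lemma:reduction} and~\ref{eladder}), prove a recurrence of the form $\val_{\FF}(L_m,w)\le \mathrm{poly}(m,w)\cdot\val_{\FF}\left(L_m,\lfloor w/2\rfloor\right)$, and unfold it $\lg w$ times; the paper's actual recurrence is $\val_{\FF}(L_m,w)\le m^2w^5\,\val_{\FF}(L_m,\lfloor w/2\rfloor)$, which with $m=2w^2+1$ yields the exponent $6.5\lg w+O(1)$. One of your stated worries is also moot: the recursion is applied to \emph{induced subposets} of the original poset, which automatically inherit $L_m$-freeness, so the ladder parameter never grows ($m'=m$) and no balancing act is needed there.

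The genuine gap is the structural lemma itself, which you explicitly defer (``the delicate part'') but which is the entire mathematical content of the theorem, and the sketch you offer does not contain the mechanism that makes the width halve. Grouping chains by witnesses of their first vertices, with lengths capped by ladder-freeness, reproduces only half of the paper's argument: ladder-freeness caps the length of monotone witness sequences (Proposition~\ref{claim:ascending_ladders}), but by itself it yields no width reduction whatsoever. In the paper, the halving comes from a different and essential device: choose a maximum antichain $A$ maximizing the minimum color appearing on it. Colors above that minimum live in a subposet of width at most $w-1$ (an additive induction term your purely multiplicative recurrence omits), and for each lower color $i$ one selects a near witness $q_i$ and a far witness $r_i$ on the chain $P_i$ that are incomparable to at least $w/2$ elements of $A$ (property $(\star)$). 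It is property $(\star)$ that forces the relevant vertices into regions of the form $D_P[z]\cap U_P[A]$, whose width is at most $w/2$: any antichain $B$ there has $\abs{D_P[B]\cap A}\le w/2$, else $(A\setminus D_P[B])\cup B$ would be an antichain larger than $w$. The Erd\H{o}s--Szekeres/ladder cap is then applied twice, in a nested fashion, and the recursive call is realized by exhibiting an explicit Grundy coloring on a width-$\le w/2$ subposet, giving $N-1\le m^2w^4(w-1)^0\cdots\val_{\FF}(L_m,\lfloor w/2\rfloor)$ up to the stated polynomial factors. Without the extremal antichain $A$ and property $(\star)$ (or a genuine substitute), there is no reason your groups should behave like First-Fit chains on a poset of width $\lceil w/2\rceil$, so the recurrence you need is asserted rather than obtained.
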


The difference between the proof of Theorem \ref{KThm} and the proofs of Theorems
\ref{BKThm} and \ref{valbound} is fundamental. In the former relations are \emph{added}
to the online poset $P^{\prec}$ to create a new online poset $Q^{\prec}$ with smaller
width so that every online chain of $Q$ can be partitioned into $5$ online chains
of $P$; then induction is applied. In the latter relations are \emph{deleted} from
$P^{\prec}$ to form an online poset $Q^{\prec}$ with the same width; this would
seem to make it harder to partition $Q$, but paradoxically limits the wrong choices
an algorithm can make.

%The First-Fit algorithm is employed in Theorems \ref{BKThm} and \ref{valbound}.
%Perhaps
The simplest online chain partitioning algorithm is First-Fit, which assigns each new vertex $v_{i}$ to the chain $C_{j}$,
with the least index $j\in\mathbb{Z}^+$ such that for all $h<i$ if $v_{h}\in C_{j}$ then $v_{h}$ is comparable to $v_{i}$.
It was observed in \cite{K-Dil} that $\val_{\FF}(\mathcal{P}_{w})=\infty$
(see \cite{K-sur} for details) for any $w>1$.
The poset used to show this fact contains substructures that are important to this paper, so we present it.

\begin{lemma}[\cite{K-Dil}]\label{CE}
For every $n\in\mathbb Z^+$ there is an online poset $R_n^{\prec}$
with $\width(R_n^{\prec})\le 2$ and $\chi_{\FF}(R_n^{\prec})=n$.
\end{lemma}
\begin{proof}
We define the online poset $R_n^{\prec}=(X,\le_R,\prec)$ as follows.
The poset $R_n$ consists of $n$ chains $X^1,\dots,X^n$ with $$X^{k}= x_k^k \le_Rx_{k-1}^k\le_R \dots \le_R x_2^k \le_R x_1^k$$ and the additional comparabilities and incomparabilities given by:
$$x_i^k \ge_R X^{1} \cup X^{2} \cup \dots \cup X^{k-2} \cup \{ x^{k-1}_{k-1} , x^{k-1}_{k-2} , \ldots , x^{k-1}_{i} \} $$
$$x_i^k \parallel_R \{ x^{k-1}_{i-1} , x^{k-1}_{i-2} , \ldots , x^{k-1}_{1} \}.$$
Note that the superscript of a vertex indicates to which chain  $X^k$ it belongs and the subscript is its index within that chain.
The example of $R_5$ is illustrated in Figure~\ref{phase5}.
The presentation order $\prec$ is given by $X^1\prec \dots \prec X^n$, where the order $\prec$ on the vertices of $X^k$ is the same as $\le_R$ on $X^k$.

Observe that $X^{k-2}\le_R X^k$.
Hence, the width of $R_n$ is $2$.
By induction on $k$ one can show that each vertex $x^k_i$ is assigned to chain $C_i$.
\end{proof}

\begin{figure}[tbh]
\begin{center}
\begin{tikzpicture}%[scale=.9]
%%% R_5
\path (.75,-1) coordinate (t)
node {$R_5$};
%phase 1
\path (0,0) coordinate (v1);
\fill (v1) circle (3pt)
node[left] {$x^1_{1\phantom{1}}$};

%phase 2
\path (1.5,0) coordinate (w2);
\fill (w2) circle (3pt)
node[right] {$\phantom{1} x_2^2$};
\path (1.5,1) coordinate (w1);
\fill (w1) circle (3pt)
node[right] {$\phantom{1} x_1^2$};

%phase 3
\path (0,1) coordinate (x3);
\fill (x3) circle (3pt)
node[left] {$x^3_{3\phantom{1}}$};
\path (0,2) coordinate (x2);
\fill (x2) circle (3pt)
node[left] {$x^3_{2\phantom{1}}$};
\path (0,3) coordinate (x1);
\fill (x1) circle (3pt)
node[left] {$x^3_{1\phantom{1}}$};

%phase 4
\path (1.5,2) coordinate (y4);
\fill (y4) circle (3pt)
node[right] {$\phantom{1} x_4^4$};
\path (1.5,3) coordinate (y3);
\fill (y3) circle (3pt)
node[right] {$\phantom{1} x_3^4$};
\path (1.5,4) coordinate (y2);
\fill (y2) circle (3pt)
node[right] {$\phantom{1} x_2^4$};
\path (1.5,5) coordinate (y1);
\fill (y1) circle (3pt)
node[right] {$\phantom{1} x_1^4$};

%phase 5
\path (0,4) coordinate (z5);
\fill (z5) circle (3pt)
node[left] {$x^5_{5\phantom{1}}$};
\path (0,5) coordinate (z4);
\fill (z4) circle (3pt)
node[left] {$x^5_{4\phantom{1}}$};
\path (0,6) coordinate (z3);
\fill (z3) circle (3pt)
node[left] {$x^5_{3\phantom{1}}$};
\path (0,7) coordinate (z2);
\fill (z2) circle (3pt)
node[left] {$x^5_{2\phantom{1}}$};
\path (0,8) coordinate (z1);
\fill (z1) circle (3pt)
node[left] {$x^5_{1\phantom{1}}$};

%vert edges%\draw (1,1) -- (1,2.5);
\draw (v1) -- (z1);
\draw (w2) -- (y1);

%horiz edges
\draw (v1) -- (w1);
\draw (w1) -- (x1);
\draw (x1) -- (y1);
\draw (y1) -- (z1);

\draw (w2) -- (x2);
\draw (x2) -- (y2);
\draw (y2) -- (z2);

\draw (x3) -- (y3);
\draw (y3) -- (z3);

\draw (y4) -- (z4);

%%%% L_m
\path (5.5,0) coordinate (e)
node {$L_m$};
%verts
\path (6,0+1) coordinate (x1);
\fill (x1) circle (3pt)
node[right] {$x_{1}$};
\path (6,1+1) coordinate (x2);
\fill (x2) circle (3pt)
node[right] {$x_{2}$};
\path (6,3.5+1) coordinate (x3);
\fill (x3) circle (3pt)
node[right] {$x_{m-1}$};
\path (6,4.5+1) coordinate (x4);
\fill (x4) circle (3pt)
node[right] {$x_{m}$};

\path (5,1+1) coordinate (y1);
\fill (y1) circle (3pt)
node[left] {$y_{1\phantom{-1}}$};
\path (5,2+1) coordinate (y2);
\fill (y2) circle (3pt)
node[left] {$y_{2\phantom{-1}}$};
\path (5,4.5+1) coordinate (y3);
\fill (y3) circle (3pt)
node[left] {$y_{m-1}$};
\path (5,5.5+1) coordinate (y4);
\fill (y4) circle (3pt)
node[left] {$y_{m\phantom{-1}}$};

%vert edges
\draw (6,0+1) -- (6,1.5+1);
\draw (6,3+1) -- (6,4.5+1);
\draw (5,1+1) -- (5,2.5+1);
\draw (5,4+1) -- (5,5.5+1);

%horiz edges
\draw (x1) -- (y1);
\draw (x2) -- (y2);
\draw (x3) -- (y3);
\draw (x4) -- (y4);

%vdots
\path (6,2+1) coordinate (d1);
\fill (d1) circle (1pt);
\path (6,2.25+1) coordinate (d2);
\fill (d2) circle (1pt);
\path (6,2.5+1) coordinate (d3);
\fill (d3) circle (1pt);

\path (5,3+1) coordinate (d111);
\fill (d111) circle (1pt);
\path (5,3.25+1) coordinate (d222);
\fill (d222) circle (1pt);
\path (5,3.5+1) coordinate (d333);
\fill (d333) circle (1pt);

\end{tikzpicture}
\end{center}
\caption{Hasse diagrams of $R_5$ and $L_m$.}\label{phase5}
\end{figure}
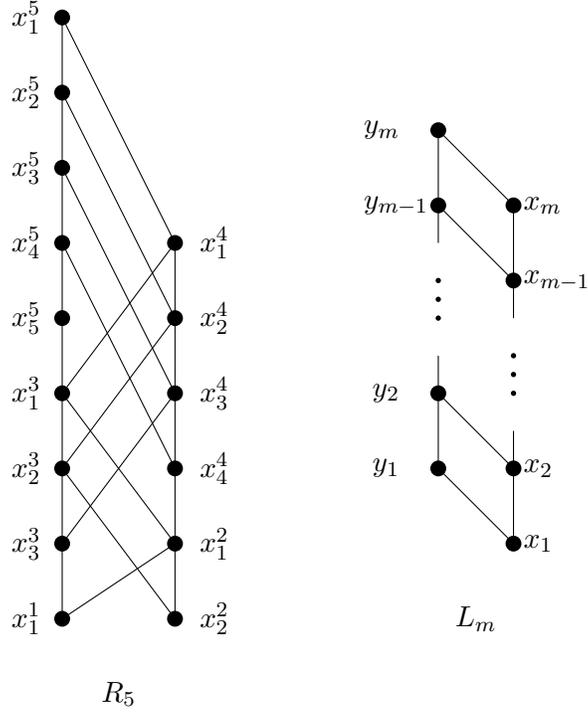

Despite Lemma~\ref{CE}, the analysis of the performance of First-Fit on restricted classes of posets has been useful and interesting.
For posets $P$ and $Q$, we say $P$ is \textit{$Q$-free} if $P$ does not contain $Q$ as an induced subposet.
Let $\Forb(Q)$ denote the family of $Q$-free posets, and $\Forb_w(Q)$ denote the family of $Q$-free posets of width at most $w$. %Slightly
Abusing notation, we write $\val_{\FF}(Q,w)$ for $\val_{\FF}(\Forb_w(Q))$.

Let $\mathbf s$ denote the total order (chain) on $s$ vertices, and $\mathbf s + \mathbf t$ denote the width $2$  poset consisting of disjoint copies of $\mathbf s$ and $\mathbf t$ with no additional comparabilities or vertices.
It is well known~\cite{fish} that the class of interval graphs is equal to $\Forb(\mathbf2+\mathbf2)$. First-Fit chain partitioning of interval orders has applications to polynomial time approximation algorithms \cite{K-FF,K2} and Max-Coloring~\cite{PRV}.
The first linear upper bound $\val_{\FF}(\mathbf2+\mathbf2,w)\le40w$ was proved by Kierstead in 1988 \cite{K-FF}.
This was improved later to $\val_{\FF}(\mathbf2+\mathbf2,w) \le 26w$ in~\cite{KQ}.
In 2004 Pemmaraju, Raman, and K. Varadarajan \cite{PRV} introduced a beautiful new technique to show $\val_{\FF}(\mathbf2+\mathbf2,w)\le10w$, and this was quickly improved to $\val_{\FF}(\mathbf2+\mathbf2,w)\le8w$ \cite{BKT,NB}.
In 2010 Kierstead, D. Smith, and Trotter \cite{KST, DS-th} proved
$5(1-o(1))w\le\val_{\FF}(\mathbf2+\mathbf2,w)$.
In 2010 Bosek, Krawczyk, and Szczypka \cite{BKSsiam} proved that $\val_{\FF}(\mathbf t+\mathbf t, w)\le3tw^2$.
This result plays an important role in the proof of Theorem~\ref{BKThm}.
Joret and Milans \cite{JM} improved this to
$\val_{\FF}(\mathbf s+\mathbf t,w)\le 8(s-1)(t-1)w$.
Recently, Dujmovi\'{c}, Joret, and Wood \cite{DJW} proved $\val_{\FF}(\mathbf t+\mathbf t,w)\le16tw$.
In 2010 Bosek, Krawczyk, and Matecki proved:
\begin{theorem}[\cite{BKM}]
For every width $2$ poset $Q$ there is a function $f_Q$ with $\val_{\FF}(Q,w)\le f_Q(w)$.
\end{theorem}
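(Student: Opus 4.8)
The plan is to argue the contrapositive: fix a width-$2$ poset $Q$, set $q = \abs{Q}$, and produce a threshold $N = N(Q,w)$ so that whenever First-Fit is forced to open at least $N$ chains on an on-line poset $P^\prec$ of width at most $w$, the poset $P$ must contain $Q$ as an induced subposet. Taking $f_Q(w) = N(Q,w)$ then yields $\val_{\FF}(Q,w) \le f_Q(w)$, since a $Q$-free width-$w$ poset can never reach $N$ chains. Because $Q$ has width $2$, Dilworth's Theorem writes $Q = A \sqcup B$ as a union of two chains $A : a_1 \le \dots \le a_p$ and $B : b_1 \le \dots \le b_r$, and the cross-relation between $A$ and $B$ is forced to be monotone (a ``staircase''): the regions where $a_i < b_j$, where $a_i > b_j$, and where $a_i \parallel b_j$ are each up/down-closed in the grid $[p]\times[r]$. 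Thus $Q$ is completely described by two chains carrying a monotone, possibly bidirectional cross-pattern, and the whole task reduces to locating two chains inside $P$ realizing exactly this cross-pattern.

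First I would extract structure from First-Fit's execution. When First-Fit places a vertex into chain $C_j$, for every $i<j$ there is an earlier vertex of $C_i$ incomparable to it; iterating over the high-index chains produces a large family of vertices carrying many incomparabilities. I would then repeatedly thin this family using two tools: Dilworth's Theorem (a width-$\le w$ poset with more than $wk$ elements has a chain of length $>k$, by pigeonholing a chain decomposition) to pull out long chains of representatives, and Ramsey's Theorem applied to the three-colouring of pairs by $<$, $>$, $\parallel$ to make comparabilities homogeneous. On a single presentation-ordered sequence this only returns chains, since large antichains are barred by the width bound; so the extraction must be run to produce \emph{two} long chains $\alpha_1<\dots<\alpha_M$ and $\beta_1<\dots<\beta_M$, followed by a product/grid Ramsey step to stabilise the bipartite cross-relation between them into one of finitely many canonical monotone configurations. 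The number of colours and thinning rounds depends only on $q$ and $w$, so $M$ can be driven above $q$ at the cost of an iterated-Ramsey (tower-type) bound on $N(Q,w)$, which is acceptable since only finiteness is claimed.

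The decisive step, and the main obstacle, is guaranteeing that the canonical cross-configuration produced is \emph{rich enough} to contain $Q$'s staircase, including cross-comparabilities in \emph{both} directions. This is precisely where a naive reduction to ladders fails: every $m$-ladder $L_m$ has all of its cross-comparabilities oriented one way (the $x$-chain below the $y$-chain), so $L_m$ cannot contain, for instance, the width-$2$ poset consisting of a $2$-element antichain lying entirely below a second $2$-element antichain, which requires a ``top-chain below bottom-chain'' relation. Hence finding a large ladder is not enough; one must show that a poset on which First-Fit opens many chains necessarily contains two chains whose cross-relation simultaneously exhibits comparabilities of both orientations together with incomparabilities, arranged flexibly enough to match an arbitrary staircase. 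The heuristic is that First-Fit opens new chains precisely on incomparabilities while the chains supply comparabilities, but converting this into a clean ``rich-configuration'' lemma, and in particular ruling out the degenerate canonical outcomes (all-incomparable, yielding only the $\mathbf s + \mathbf t$ patterns already handled by earlier work, or purely one-directional, yielding only sub-ladders) is the crux.

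Finally, once a sufficiently large rich configuration is in hand, I would finish by a direct embedding: choose indices along $\alpha_\bullet$ and $\beta_\bullet$ so that the induced sub-grid reproduces the monotone cross-pattern of $Q$, exhibiting $Q$ as an induced subposet of $P$ and completing the contrapositive. The resulting $f_Q$ will be enormous (tower-type in $q$ and $w$), but this suffices for the statement; one expects the sharp linear bounds for the special cases $Q = \mathbf s + \mathbf t$ to emerge from the degenerate, all-incomparable branch of the same analysis.
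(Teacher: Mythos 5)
You should note at the outset that the paper itself contains no proof of this statement: it is quoted from \cite{BKM}, and nothing in the present paper substitutes for that proof---the closest in-paper relatives are Lemma~\ref{ladupbound} (ladders) and the cited bounds for $\mathbf s+\mathbf t$, each of which handles only a special family of width-$2$ posets. So your proposal must stand on its own, and it does not: the step you yourself label ``the crux''---producing, inside any width-$w$ poset on which First-Fit uses many colors, a two-chain configuration rich enough to host an arbitrary staircase---is not a technical detail you may defer; it \emph{is} the theorem. Everything before it (contrapositive framing, Dilworth decomposition of $Q$, monotonicity of the cross-relation) is routine, and everything after it (the final index-picking embedding) is easy; the entire content of \cite{BKM} sits precisely in the part you leave blank.

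Worse, the Ramsey mechanism you propose works against you rather than for you. Coloring the pairs $(\alpha_i,\beta_j)$ of two extracted chains by their relation and homogenizing (whether by monochromatic-grid Ramsey or by an order-type-respecting version) can only output configurations in which the cross-relation is a function of the order type of $(i,j)$: all-incomparable (i.e.\ $\mathbf m+\mathbf m$), purely one-directional (a ladder $L_m$ or its dual), a single merged chain, or the ``diagonal'' pattern in which $\alpha_i\parallel\beta_j$ exactly when $i=j$ and both directions of comparability occur off the diagonal. The configuration you actually need---incomparability exactly in a band $\abs{i-j}<d$ with $d$ large relative to $\abs{Q}$, and comparabilities in both directions outside it---is not of this form, and Ramsey thinning (passing to sparse subsequences) collapses any such band to the diagonal instead of creating one. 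Each homogeneous outcome is itself a width-$2$ poset omitting other width-$2$ posets: ladders omit the $2$-antichain-over-$2$-antichain poset (your own example), $\mathbf m+\mathbf m$ omits $L_2$ (any connected induced subposet of $\mathbf m+\mathbf m$ is a chain), and the diagonal pattern omits even $\mathbf 1+\mathbf 2$, since each of its vertices is incomparable to exactly one other vertex. Consequently the contrapositive cannot be closed for a \emph{fixed} $Q$: a $Q$-free poset of width $w$ may legitimately contain gigantic configurations of the wrong homogeneous types (e.g.\ huge copies of $\mathbf m+\mathbf m$ in a poset avoiding the $2$-over-$2$ poset), so exhibiting one of them yields no contradiction, and nothing in your extraction steers the outcome toward the one type adapted to $Q$. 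This is also why the in-paper argument for the one-directional case (Lemma~\ref{ladupbound}) uses no Ramsey homogenization but a bespoke induction with near and far witnesses, property~$(\star)$, and Erd\H{o}s--Szekeres; and your closing expectation that the linear $\mathbf s+\mathbf t$ bounds would ``emerge from the degenerate branch'' of such an analysis is backwards---those results are proved by direct arguments of an entirely different character.
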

Lemma~\ref{CE} shows that the theorem cannot be extended to posets $Q$ with width greater than $2$.

%In \cite{BK-FOCS}, Bosek and Krawczyk drew attention to \textit{ladders}.
Let $m\in \mathbb{Z}^+$.
An \emph{$m$-ladder} is a poset  $L_m=L(x_1\dots x_m;y_1\dots y_m)$ with vertices $x_1,y_1,\dots, x_m,y_m$ such that
% consist of two disjoint chains
$x_1 <_L  \dots <_L x_m$, $y_1  <_L \dots <_L y_m$, $x_i <_L y_j$ for $1 \leq i \leq j \leq m$, and $x_i \parallel_L y_{j}$ for $1\leq j < i \leq m$.
The vertices $x_1, \ldots, x_m$ are the \textit{lower leg} and the vertices $y_1, \ldots , y_m$ are the \textit{upper leg} of  $L_m$.
The vertices $x_i$, $y_i$ together form the \textit{$i$-th rung} of $L_m$.
We provide a Hasse diagram of $L_m$ in Figure~\ref{phase5}.
Notice that for two consecutive chains $X^i$ and $X^{i+1}$ of $R_n$, the set $X^i\cup(X^{i+1}-x^{i+1}_{i+1})$ induces the ladder $L_i$ in $R_n$.

Our attack is based on the following observation of Bosek and Krawczyk, first mentioned in \cite{BK-FOCS, BK15}, but never proved so far.
\begin{lemma} \label{ladobs}
$\val(\mathcal P_w)\le w\val_{\FF}(L_{2w^2+1},w)$ for $w \in \mathbb Z^+$.
%If $\val_{\FF}(L_{m},w)$ is bounded from above by a function in $m$ and $w$, then $\val(\mathcal P_w)$ is bounded from above by a function in $w$.
\end{lemma}
In this paper we provide the first proof of the above-mention lemma.
Kierstead and Smith completed this attack with the next lemma.
\begin{restatable}[\cite{KSm}]{lemma}{ladupbound}
\label{ladupbound}
$\val_{\FF} (L_m, w) \le  w^{2.5 \lg (2w) + 2 \lg m}$ for $m, w \in \mathbb Z^+$.
\end{restatable}
Combining Lemmas~\ref{ladobs} and ~\ref{ladupbound} we get $\val(\mathcal P_w) \le w^{6.5 \lg w + 7}$, which completes the proof of Theorem~\ref{valbound}.
Beside that, the paper presents two new constructions to show that the bounds given in Lemmas~\ref{ladobs} and ~\ref{ladupbound} can not be improved substantially and hence a new technique will be needed to prove a polynomial upper bound on $\val(\mathcal P_w)$.

This paper is organized as follows.
Section~\ref{sec:preliminaries} introduces some notation and definitions.
In Section \ref{sec:online-algorithm} we present our online algorithm
and reduce the proof of its performance bound to proving Lemmas~\ref{ladobs} and ~\ref{ladupbound},
which are shown in Sections~\ref{sec:R-is-ladder-free} and \ref{sec:firstfit}.
In Section \ref{sec:limitations} we present constructions that show limitations of our approach.
Section \ref{sec:concluding} contains some concluding observations.

\section{Preliminaries}
\label{sec:preliminaries}

Let $P = (V,\le_P)$ be a poset with $u,v \in V$.
We usually write $u \in P$ for $u \in V(P)$.
The \textit{upset of $u$ in $P$} is $U_P (u) =\{ v : u <_P v\}$, the \textit{downset of $u$ in $P$} is $D_P (u) =\{ v : v <_P u\}$, and the \textit{incomparability set of $u$ in $P$} is $I_P (u) = \{ v : v \parallel_P u \}$.
The \textit{closed upset} and \textit{closed downset of $u$ in $P$} are, respectively, $U_P [u] = U_P (u) + u$ and $D_P [u] = D_P (u) + u$.
Define $[u,v]_P = U_P [u] \cap D_P [v]$. For $U \subseteq V$, define $D_P (U) = \bigcup_{u \in U} D_P (u)$, $U_P (U) = \bigcup_{u \in U} U_P (u)$, $D_P [U] = D_P (U) \cup U$ and
$U_P [U] = U_P (U) \cup U$. If $U' \subseteq V$, let $[U,U']_P = U_P [U] \cap D_P [U']$.
The \textit{subposet of $P$ induced by $U$} is denoted by $P[U]$, and  $P-u$ denotes $P[V-u]$.
If $U_P (u) = \emptyset$, then $u$ is \textit{maximal}. If $D_P (u) = \emptyset$, then $u$ is \textit{minimal}. If $D_P [u] = P$, then $u$ is \textit{maximum}.
If $U_P [u] = P$, then $u$ is \textit{minimum}.
Let $\Max_P (U)$ be the set of maximal vertices in $P[U]$ and $\Min_P (U)$ be the set of minimal vertices in $P[U]$.
Let $\Max_P=\Max_P (V)$ and $\Min_P=\Min_P (V)$.

A chain partition $\mathcal C$ of $P$ is a \textit{Dilworth partition} if $|\mathcal C|=\width(P)$.
If vertices $u$ and $v$ are in the same chain of some Dilworth partition then  $uv$ is called a \textit{Dilworth edge} of $P$.

Let $\mathcal M_P= (\mathcal V_P, \sqsubseteq_P)$, where $\mathcal V_P$ is the set of maximum antichains in $P$ and $\sqsubseteq_P$  is defined by
$$A \sqsubseteq_P B \text{ if }  A \subseteq D_P [B] \text{ (or equivalently $B \subseteq U_P [A]$)}.$$
If $A \sqsubseteq_P B$ and $A \neq B$, we write $A \sqsubset_P B$.
In \cite{aclat} Dilworth showed that $\mathcal M_P$ is a lattice with the meet and the join defined by
$$A \wedge B  = \Min_P \{ A \cup B \}~\mathrm{and}~
A \vee B   =\Max_P \{A \cup B \}. $$
%For notational purposes we let $\mathbf0=\bigvee\emptyset$ be smaller and $\mathbf1=\bigwedge\emptyset$ be larger than any elements of %$\mathcal M_P$ \todo{I do not understand this convention}.

A poset $P=(V,\le_P)$ is \emph{bipartite} if the set $V$ can be partitioned into two disjoint antichains $A,B$ such that
$A \sqsubset_P B$ --- such a poset is denoted by $(A,B,\le_P)$.
A bipartite poset $P=(A,B,\le_P)$ is a \textit{core} if $|A|=|B|$ and for any comparable pair $x \le_P y$ with $x \in A$ and $y \in B$,
$xy$ is a Dilworth edge (see Figure~\ref{excore}).
Informally, we think of a core as a bipartite poset whose Hasse diagram is a balanced bipartite graph in which each edge is included in some perfect matching.

\begin{figure}[tbh]
\begin{center}
\begin{tikzpicture}%[scale=.9]
%%start points
%%\path (0,5) coordinate (R3);
%\path (0,2.5) coordinate (R2);
%\path (0,0) coordinate (R1);
\path (0,0) coordinate (R2);
\path (7,0) coordinate (R1);

%%%core P
%%verts
%\foreach \i in {1,...,5}{
%\path (R3) ++(\i,1) coordinate (y3\i);
%\fill (y3\i) circle (3pt);
%\path (R3) ++(\i,0) coordinate (x3\i);
%\fill (x3\i) circle (3pt);}
%
%%edges
%\foreach \i in {1,...,5}{
%\draw (x3\i) -- (y3\i);}
%\draw (x33) -- (y34);
%\draw (x33) -- (y35);
%\draw (x34) -- (y33);
%\draw (x34) -- (y35);
%\draw (x35) -- (y33);
%\draw (x35) -- (y34);
%
%%label
%\path (x31) ++ (-1,1/2)node{$P$};
%\path (x35) ++ (1,1/2)node{$\phantom{P}$};

%%core Q
%verts
\foreach \i in {1,...,5}{
\path (R2) ++(\i,1) coordinate (y2\i);
\fill (y2\i) circle (3pt);
\path (R2) ++(\i,0) coordinate (x2\i);
\fill (x2\i) circle (3pt);}

%edges
\foreach \i in {2,...,5}{
\draw (x2\i) -- (y2\i);
\draw (x21) -- (y2\i);}
\draw (x21) -- (y21);
\draw (x22) -- (y21);
\draw (x23) -- (y22);
\draw (x24) -- (y23);
\draw (x25) -- (y24);

%label
\path (x21) ++ (-1,1/2)node{$Q$};
\path (x25) ++ (1,1/2)node{$\phantom{Q}$};

%%non-core R
%verts
\foreach \i in {1,...,5}{
\path (R1) ++(\i,1) coordinate (y1\i);
\fill (y1\i) circle (3pt);
\path (R1) ++(\i,0) coordinate (x1\i);
\fill (x1\i) circle (3pt);}

%edges
\foreach \i in {1,...,5}{
\draw (x1\i) -- (y1\i);}
\foreach \i in {3,...,5}{
\draw (y15) -- (x1\i);}
\draw (x11) -- (y12);
\draw (x12) -- (y11);
\draw (x12) -- (y13);
\draw (x14) -- (y13);
\draw (x15) -- (y14);

%label
\path (x11) ++ (-1,1/2)node{$R$};
\path (x15) ++ (1,1/2)node{$\phantom{R}$};
\path (x12) ++ (0,-1/3)node{$x$};
\path (y13) ++ (0,1/3)node{$y$};
\end{tikzpicture}
\end{center}
\caption{Poset  $Q$ is a core of width 5. $R$ is not a core since $xy$ is not a Dilworth edge.}\label{excore}
\end{figure}
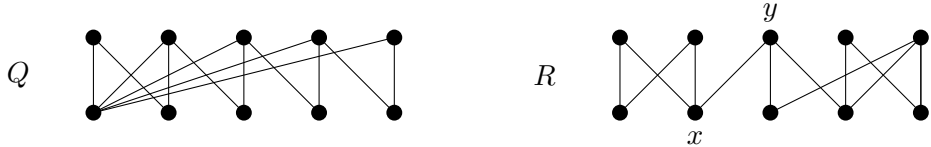

A chain in a poset $P$ corresponds to an independent set in its cocomparability graph.
Offline the terms chain partition and coloring are interchangeable, but an online chain partitioning algorithm has more information to use than an online coloring algorithm.
This advantage is lost by First-Fit.

%With respect to First-Fit coloring, describing the presentation order of an on-line poset is cumbersome. To avoid confusion, we introduce

The notion of Grundy coloring is useful for analyzing First-Fit.

\begin{definition}\label{Pgr}
Let  $n\in \mathbb Z^+$.  A function $\gr{g} : P \rightarrow [n]$ is an \textit{$n$-Grundy coloring of a poset $P$} if
\begin{enumerate}[label=(G\theenumi)]
\item\label{G:1} for each $i \in [n]$, the set $\{ u \in P : \gr{g} (u) = i \}$ is a chain in $P$;
\item\label{G:2} for each $i \in [n]$, there is some $u \in P$ so that $\gr{g} (u) = i$ (i.e.: $\gr{g}$ is surjective); and
\item\label{G:3} if $v \in P$ with $\gr{g} (v) = j$, then for all $i \in [j-1]$ there is some $u \in I_P (v)$ such that $\gr{g}(u) = i$.
\end{enumerate}
\end{definition}

Often, we call the elements of $[n]$ as \textit{colors}.
If $u\in P$ and $\gr{g} (u) = i$, we say $u$ is \textit{colored with}~$i$.   %Let the \textit{color class $i$} be the chain $P_i (\gr{g}) = \{ u \in P : \gr{g} (u) = i \}$. If we are only concerned with one coloring function, we will shorten this to $P_i$. If $Q$ is a subposet of $P$ and $Q \cap P_i \neq \emptyset$, then color~$i$ \textit{appears} on $Q$.
Let $u,v \in P$.  
If $u \parallel_P v$ and $\gr{g} (u) < \gr{g}(v)$, we say $u$ is a \textit{$\gr{g}(u)$-witness for $v$ under $\gr{g}$}.
%If we are only concerned with one coloring function, this will be shortened to \textit{$\gr{g}(u)$-witness}.  If we are not concerned with a specific color or the color is specified in another way, we will simply say $u$ is a \textit{witness} for $v$.
%
%If $Q$ is a subposet of $P$ and $\gr{g}$ is an $n$-Grundy coloring of $P$, we will abuse notation and use $\gr{g}$ for the function $\gr{g} |_Q : Q \rightarrow [n]$ (i.e.: the function $\gr{g}$ with domain restricted to $Q$). Note that $\gr{g}$ might not be an $n$-Grundy coloring of $Q$.
%

The next lemma is folklore. 
It allows the analysis of a dynamic online process in a static setting.

\begin{lemma} \label{PgrP}
For any poset $P$, the largest $n$ for which $P$ has an $n$-Grundy coloring  is equal to $\chi_{\FF} (P)$.
\end{lemma}

\section{The online algorithm}
\label{sec:online-algorithm}

In this section we provide a simple online algorithm $\gr A$ for chain partitioning online posets.
In the next two sections we show that $\gr A$ achieves the performance bound stated  in Theorem~\ref{valbound}.
If $W$ is a subset of $P$, we set $W^{x}=W \cap \{y:y\preceq x\}$ and $W^{\prec x}=W \cap \{y:y\prec x\}$.

\subsection{Overview}
\label{subsec:overview}

We define $\gr A$ using three Procedures.
Consider an online poset $P^\prec=(V,\le_P,\prec)$.
\begin{enumerate}
\item Construct an online partition $V=X_1\cupdot \dots \cupdot X_{\width(P)}$ by putting
every consecutive vertex $x$ of $(V, \prec)$ to the set $X_w$, where $w$ is the least integer
with $\width(P[X_1^{\prec x} \cup \dots \cup X^{\prec x}_w \cup \{x\}])=w$.
Pick a $w$-antichain $A'_x$ in $P[X_1^{x} \cup \ldots \cup X_w^x]$ with $x\in A'_x$.

\item For every $w \in [\width(P)]$, construct an on-line poset $R^{\ll}_w$, where $R_w = (Z, \le_R)$, together with an injection $\phi:X_w \to Z$ that satisfies the property that $R_w[\phi(X_w)]$ is a
subposet of $P[X_w]$.
Thus, a partition of $R^{\ll}_w$ into chains yields a partition of $P[X_w]$ into chains.
This more complex procedure is explained in Subsection~\ref{subsec:procedure2}.

\item For every $w \in [\width(P)]$, use First-Fit to partition $R^{\ll}_w$ into chains.

\end{enumerate}
The final chain partition consists of all chains produced by Procedure (3) for $w=1,\dots,\width(P)$.\footnote{Kierstead and Trotter \cite{KT} used Procedures (1) and (3), without (2), to prove $\val(\Forb_w(\mathbf2+\mathbf2))=3w-2$.}

In Section \ref{sec:R-is-ladder-free} we show that $R_w$ is a $(2w^2+1)$-ladder free poset of width $w$.
In Section \ref{sec:firstfit} we show that  $$\val_{FF}(L_m,w)\le w^{2.5\lg (2w)+2\lg m}.$$
Then, since a chain partition of $R^\ll_w$ yields a chain partition of $P[X_w]$ with at most the same number of chains,
Theorem~\ref{valbound} follows by

\begin{equation}\label{eqM}
\val(\mathcal P_w)\le \sum^w_{j=1}\val_{FF}(R_j^{\ll}) \le w \cdot \val_{FF}(L_{2w^2+1},w)\le w^{2.5\lg (2w)+2\lg (2w^2+1)+1}
\le w^{6.5\lg w+7}.
\end{equation}

In the remaining of the paper, we write $R^\ll$ and $R$ instead of $R^\ll_w$ and $R_w$ whenever $w$ is clear from the context.

\subsection{Procedure (2)}%
\label{subsec:procedure2} %
Fix $w \in [\width(P)]$.
Note that Procedure (1) produces a partition of $V$ into $X_1 \cupdot \ldots \cupdot X_{\width(P)}$
such that $\width(P[X_1 \cup \ldots \cup X_w]) = w$.
Let $V_w = X_1 \cup \ldots \cup X_w$ and let $\mathcal M=\mathcal M(P[V_w])$ be the set of all maximum antichains in $P[V_w]$.
% and $\mathcal M^{v}=\mathcal M(P^{v}[V_w])$.
%Suppose $X_w=\{x_1,\dots,x_m\}$ and $x_1\prec \dots\prec x_{m}$.
First, algorithm $\gr A$ constructs a chain $\mathcal A=\{A_y:y\in X_w\}$ in $(\mathcal M, \sqsubseteq_P)$.
 %$w$-antichains  with $x_i\in A_i\in\mathcal M^{x_i}$.
The antichain $A_x$ is obtained from $A'_x$ and the $\sqsubseteq_P$-chain $\mathcal A^{\prec x}=\{A_y:y\in X^{\prec x}_w\}$
when $x\in X_w$ is processed.
Put
$$A_x=(A'_x \wedge U_x)\vee  D_x \text{,}$$
where
$$\mathcal U_x=\{A \in \mathcal{A}^{\prec x}: x \in D_P(A)\} \text{ and }
U_x= \left\{
\begin{array}{rl}
\bigwedge \mathcal U_x &\text{if } \mathcal U_x \neq \emptyset \\
\emptyset & \text{otherwise,}
\end{array}
\right.
$$
and
$$ \mathcal D_x= \{A \in \mathcal{A}^{\prec x}: x \in U_P(A)\} \text{ and }
D_x= \left\{
\begin{array}{rl}
\bigvee \mathcal D_x & \text{ if } \mathcal D_x \neq \emptyset \\
\emptyset & \text{otherwise,}
\end{array}
\right.
$$
see Figure \ref{fig:A_x}.
Clearly, $x\in A_x$.
Each $A\in \mathcal A^{\prec x}$ is a $w$-antichain contained in $P[V_w^{\prec x}]$,
so some $y\in A$ is comparable to $x$.
Thus $\mathcal A^{\prec x}=\mathcal D_x \cup \mathcal U_x$.
% As $\mathcal A^{\prec x}$ is a chain, not $v <_P u$ for all $v\in U_x$ and $u\in D_x$.
Let $\mathcal{A}^{x} = \mathcal A^{\prec x} \cup \{A_x\}$.
As $\mathcal A^{\prec x}$ is a chain and $u <_P x <_P v$ for some $u\in D_x$ and $v\in U_x$ we note that
\begin{align}
\mbox{$\mathcal{A}^{x}$ is a $\sqsubseteq_P$-chain with consecutive elements  $ D_x, A_x,  U_x$ (unless $\mathcal D_x=\emptyset$ or $\mathcal U_x=\emptyset$).}
\end{align}
Define $p(x)$ by $A_{p(x)}=D_x$ if $\mathcal D_x\ne\emptyset$ and $s(x) $ by  $A_{s(x)}= U_x$ if $\mathcal U_x\ne\emptyset$.

\begin{figure}[tbh]
\begin{center}
\begin{tikzpicture}%[scale=.9]
%verts  in Q
%verts of K
\path (0,1) coordinate (k1);
\fill (k1) circle (3pt);
\path (1,2) coordinate (k2);
\fill (k2) circle (3pt);
\path (2,2) coordinate (k3);
\fill (k3) circle (3pt);
\path (3,2) coordinate (k4);
\fill (k4) circle (3pt);
\path (4,2) coordinate (k5);
\fill (k5) circle (3pt);

%verts of A
\path (0,2) coordinate (x1);
\fill (x1) circle (3pt);
\path (1,0) coordinate (x2);
\fill (x2) circle (3pt);
\path (2,1) coordinate (x3);
\fill (x3) circle (3pt);
\path (3,1) coordinate (x4);
\fill (x4) circle (3pt);
\path (4,1) coordinate (x5);
\fill (x5) circle (3pt)
node[right,black] {$\: x$};

%verts of J
\path (1,1) coordinate (j2);
\fill (j2) circle (3pt);
\path (2,0) coordinate (j3);
\fill (j3) circle (3pt);
\path (3,0) coordinate (j4);
\fill (j4) circle (3pt);
\path (4,0) coordinate (j5);
\fill (j5) circle (3pt);

%K path
\draw[blue] (-.25,.75) -- (-.25,1.25) -- (.25,1.25) -- (.75,2.25) -- (4.25,2.25) --node[right,black]{$U_x$} (4.25,1.75) -- (.75,1.75) -- (.25,.75) -- (-.25,.75);

%L path
\draw[green] (-.25,1.75) -- (-.25,2.25) -- (.25,2.25) -- (.75,.25) -- (1.25,.25) -- (1.75,1.25) -- (4.5,1.25) --node[right,black]{$A_x'$} (4.5,.75) -- (1.75,.75) -- (1.25,-.25) -- (.75,-.25) -- (.25,1.75) -- (-.25,1.75);

%J path
\draw[red] (-.4,.6) -- (-.4,1.4) -- (1.15,1.4) -- (1.75,.3) -- (4.25,.3) --node[right,black]{$D_x$} (4.25,-.3) -- (1.75,-.3) -- (1.25,.6) -- (-.4,.6);

%comps
\draw (k1) -- (x1)[thick];

\draw (x2) -- (j2) -- (k2)[thick];
\draw (j2) -- (k3)[thick];

\draw (j3) -- (k2)[thick];
\draw (j3) -- (x3) -- (k3)[thick];
\draw (j3) -- (x4)[thick];

\draw (j4) -- (x3)[thick];
\draw (j4) -- (x4) -- (k4)[thick];
\draw (j4) -- (x5)[thick];

\draw (j5) -- (x5) -- (k5)[thick];

\draw (x4) -- (k5)[thick];
\draw (x5) -- (k4)[thick];

%verts  in R
%verts of K
\path (6,1) coordinate (kk1);
\fill (kk1) circle (3pt);
\path (7,2) coordinate (kk2);
\fill (kk2) circle (3pt);
\path (8,2) coordinate (kk3);
\fill (kk3) circle (3pt);
\path (9,2) coordinate (kk4);
\fill (kk4) circle (3pt);
\path (10,2) coordinate (kk5);
\fill (kk5) circle (3pt);

%verts of A
%\path (6,2) coordinate (s1);
%\fill (s1) circle (3pt);
%\path (7,0) coordinate (s2);
%\fill (s2) circle (3pt);
\path (8,1) coordinate (xx3);
\fill (xx3) circle (3pt);
\path (9,1) coordinate (xx4);
\fill (xx4) circle (3pt);
\path (10,1) coordinate (xx5);
\fill (xx5) circle (3pt)
node[right,black] {$\: x$};

%verts of J
\path (7,1) coordinate (jj2);
\fill (jj2) circle (3pt);
\path (8,0) coordinate (jj3);
\fill (jj3) circle (3pt);
\path (9,0) coordinate (jj4);
\fill (jj4) circle (3pt);
\path (10,0) coordinate (jj5);
\fill (jj5) circle (3pt);

%K path
\draw[blue] (5.75,.75) -- (5.75,1.25) -- (6.25,1.25) -- (6.75,2.25) -- (10.25,2.25) --node[right,black]{$U_x$} (10.25,1.75) -- (6.75,1.75) -- (6.25,.75) -- (5.75,.75);

%L path
\draw[green] (5.675,.675) -- (5.675,1.325) -- (10.5,1.325) --node[right,black]{$A_x$} (10.5,.675) -- (5.675,.675);

%J path
\draw[red] (5.6,.6) -- (5.6,1.4) -- (7.15,1.4) -- (7.75,.3) -- (10.25,.3) --node[right,black]{$D_x$} (10.25,-.3) -- (7.75,-.3) -- (7.25,.6) -- (5.6,.6);

%comps
%\draw (kk1) -- (s1)[thick];

\draw (jj2) -- (kk2)[thick];
\draw (jj2) -- (kk3)[thick];

\draw (jj3) -- (kk2)[thick];
\draw (jj3) -- (xx3) -- (kk3)[thick];
\draw (jj3) -- (xx4)[thick];

\draw (jj4) -- (xx3)[thick];
\draw (jj4) -- (xx4) -- (kk4)[thick];
\draw (jj4) -- (xx5)[thick];

\draw (jj5) -- (xx5) -- (kk5)[thick];

\draw (xx4) -- (kk5)[thick];
\draw (xx5) -- (kk4)[thick];

\end{tikzpicture}
\end{center}
\caption{Constructing $A_x$ based on $A_x'$, $D_x$, and $U_x$.}\label{fig:A_x}
\end{figure}
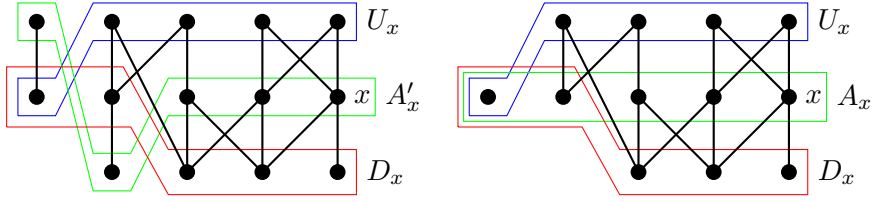

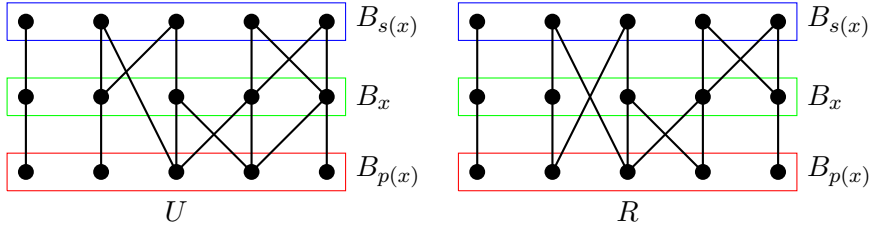
\begin{figure}[tbh]
\begin{center}
\begin{tikzpicture}%[scale=.9]
%verts  in Q
%verts of K
\path (0,2) coordinate (k1);
\fill (k1) circle (3pt);
\path (1,2) coordinate (k2);
\fill (k2) circle (3pt);
\path (2,2) coordinate (k3);
\fill (k3) circle (3pt);
\path (3,2) coordinate (k4);
\fill (k4) circle (3pt);
\path (4,2) coordinate (k5);
\fill (k5) circle (3pt);

%verts of L
\path (0,1) coordinate (x1);
\fill (x1) circle (3pt);
\path (1,1) coordinate (x2);
\fill (x2) circle (3pt);
\path (2,1) coordinate (x3);
\fill (x3) circle (3pt);
\path (3,1) coordinate (x4);
\fill (x4) circle (3pt);
\path (4,1) coordinate (x5);
\fill (x5) circle (3pt);

%verts of J
\path (0,0) coordinate (j1);
\fill (j1) circle (3pt);
\path (1,0) coordinate (j2);
\fill (j2) circle (3pt);
\path (2,0) coordinate (j3);
\fill (j3) circle (3pt);
\path (3,0) coordinate (j4);
\fill (j4) circle (3pt);
\path (4,0) coordinate (j5);
\fill (j5) circle (3pt);

%K path
\draw[blue] (-.25,1.75) -- (-.25,2.25) -- (4.25,2.25) --node[right,black]{$B_{s(x)}$} (4.25,1.75) -- (-.25,1.75);

%L path
\draw[green] (-.25,.75) --(-.25,1.25) -- (4.25,1.25) --node[right,black]{$B_x$}
(4.25,.75) -- (-.25,.75);

%J path
\draw[red] (-.25,-.25) -- (-.25,.25) -- (4.25,.25) --node[right,black]{$B_{p(x)}$} (4.25,-.25) --node[below,black]{$U$} (-.25,-.25);

%comps
\draw (j1) -- (x1) -- (k1)[thick];

\draw (j2) -- (x2)[thick];

\draw (j3) -- (k2)[thick];
\draw (j3) -- (x3)[thick];
\draw (j3) -- (x4)[thick];

\draw (j4) -- (x3)[thick];
\draw (j4) -- (x4)[thick];
\draw (j4) -- (x5)[thick];

\draw (j5) -- (x5)[thick];

\draw (x2) -- (k2)[thick];
\draw (x2) -- (k3)[thick];

\draw (x3) -- (k3)[thick];

\draw (x4) -- (k4)[thick];
\draw (x4) -- (k5)[thick];

\draw (x5) -- (k4)[thick];
\draw (x5) -- (k5)[thick];

%verts  in R
%verts of K
\path (6,2) coordinate (k11);
\fill (k11) circle (3pt);
\path (7,2) coordinate (k21);
\fill (k21) circle (3pt);
\path (8,2) coordinate (k31);
\fill (k31) circle (3pt);
\path (9,2) coordinate (k41);
\fill (k41) circle (3pt);
\path (10,2) coordinate (k51);
\fill (k51) circle (3pt);

%verts of L
\path (6,1) coordinate (x11);
\fill (x11) circle (3pt);
\path (7,1) coordinate (x21);
\fill (x21) circle (3pt);
\path (8,1) coordinate (x31);
\fill (x31) circle (3pt);
\path (9,1) coordinate (x41);
\fill (x41) circle (3pt);
\path (10,1) coordinate (x51);
\fill (x51) circle (3pt);

%verts of J
\path (6,0) coordinate (j11);
\fill (j11) circle (3pt);
\path (7,0) coordinate (j21);
\fill (j21) circle (3pt);
\path (8,0) coordinate (j31);
\fill (j31) circle (3pt);
\path (9,0) coordinate (j41);
\fill (j41) circle (3pt);
\path (10,0) coordinate (j51);
\fill (j51) circle (3pt);

%K path
\draw[blue] (5.75,1.75) -- (5.75,2.25) -- (10.25,2.25) --node[right,black]{$B_{s(x)}$}
(10.25,1.75) -- (5.75,1.75);

%L path
\draw[green] (5.75,.75) -- (5.75,1.25) -- (10.25,1.25) --node[right,black]{$B_x$}
(10.25,.75) -- (5.75,.75);

%J path
\draw[red] (5.75,-.25) -- (5.75,.25) -- (10.25,.25) --node[right,black]{$B_{p(x)}$}
(10.25,-.25) --node[below,black]{$R$} (5.75,-.25);

%comps
\draw (j11) -- (x11) -- (k11)[thick];

\draw (j21) -- (x21)[thick];
\draw (j21) -- (k31)[thick];

\draw (j31) -- (k21)[thick];
\draw (j31) -- (x31)[thick];
\draw (j31) -- (x41)[thick];

\draw (j41) -- (x31)[thick];
\draw (j41) -- (x41)[thick];

\draw (j51) -- (x51)[thick];

\draw (x21) -- (k21)[thick];

\draw (x31) -- (k31)[thick];

\draw (x41) -- (k41)[thick];
\draw (x41) -- (k51)[thick];

\draw (x51) -- (k41)[thick];
\draw (x51) -- (k51)[thick];

\end{tikzpicture}
\end{center}
\caption{Hasse diagrams of $U[B_{p(x)} \cup B_x \cup B_{s(x)}]$ and $R[B_{p(x)} \cup B_x \cup B_{s(x)}]$.}\label{figure:R_relation}
\end{figure}

The maximum antichains $A_x$ in $\mathcal{A}$ are computed in the order in which the elements $x$ are added to the set $X_w$.
So, we may view $(\bigcup \mathcal{A}, \le_P)$ as an on-line poset with the presentation order extended by the elements from $A_x \setminus \bigcup \{ A_y: y \in X^{\prec x}_w \}$ each time a new antichain $A_x$ from $\mathcal{A}$ is computed.
It is likely that the antichains in $\mathcal A$ are not disjoint.
In the next step we slightly modify this poset by making these antichains pairwise disjoint.

When $x$ is processed, set $B_x = \{(u, A_x): u \in A_x \}$.
Let $\mathcal B=\{B_y:y\in X_w\}$, $Z=\bigcup \mathcal B$ and, following our notation,
let $\mathcal B^x=\{B_y:y\in X_w^x\}$ and $Z^x=\bigcup \mathcal B^x$.
Let $\le_U$ be the product order defined by
\begin{align}\label{defU}
(u,A) \le_{U} (u',A') &\iff u \le_P u' \textrm{ and } A \sqsubseteq A', \quad \text{for } u \in A,\ u' \in A', \text{ and } A,A' \in \mathcal{A}.
\end{align}
Define the presentation order $\ll$ of the poset $U = (Z, \le_U)$ by putting $B_x \ll B_y$ if $x \prec y$ for $x,y \in X_w$,
and by arbitrarily ordering the elements in $B_x$ for $x \in X_w$.

By \eqref{defU}, $B_x$ is an $w$-antichain in $U$.
Indeed,  if $S$ is a $(w+1)$-subset in $Z$ then there are distinct $y,z\in S$ with comparable (possibly identical) first coordinates.
By \eqref{defU}, they are comparable in $U$.
Thus $\width( U)=w$, and $\mathcal B$  partitions $Z$  into disjoint maximum antichains.
Moreover, $\mathcal B$ is a $\sqsubseteq_U$-chain.
Note that the antichain $B_x$ and the relation $\le_U$ between the elements in $B_x$ and the elements in the set $Z^{\prec x}$
can be computed when $x$ is processed.

%Also note that $\phi (P^{\prec}[X_w])\subseteq U^{\ll}$, where
%the map $\phi:X_w\rightarrow Y$ is defined by $x\mapsto (x,A_x)$.

In its last step, Procedure (2) constructs an online poset $R^{\ll}=(Z, \le_R, \ll)$,
where $R=(Z,\le_R)$ is obtained from $(Z,\le_U)$ by deleting some non-Dilworth edges of $(Z,\le_U)$.
Suppose $\le_R$ restricted to $Z^{\prec x}$ is already computed.
When $x$ is processed, the edges $(v,u)$ of $U[Z^x]$ with $u\in B_x$ are deleted unless there is a Dilworth edge
$(u',u)\in U[B_{p(x)}\cup B_x]$ such that $v \le_R u'$ (possibly $v=u'$) in $(Z^{\prec x}, \le_R)$.
Dually, the edges $(u,v)$ of $U[Z^x]$ with $u\in B_x$ are deleted unless there is a Dilworth edge
$(u,u')\in U[B_{x}\cup B_{s(x)}]$ such that $u'\le_R v$ (possibly $v=u'$) in $(Z^{\prec x}, \le_R)$.
This completes the definition of $R$.
Figure \ref{figure:R_relation} illustrates the derivation of  $\le_R$ from $\le_U$.

Note that $R[Z^x]$ can be computed from $P^x$.
Note that if $(v,u)$ is a Dilworth edge in $U[Z^x]$ then $v <_R u$ in $R$.
We prove this by induction on $\prec$.
The claim holds for the set $Z^{y}$, where $y$ is the first vertex in $(V, \prec)$ added to $X_w$.
Suppose the claim holds for $Z^{\prec x}$.
Assume that $v\ll u$ and $u \in B_x$ (the other cases are handled similarly).
Then there is a Dilworth partition $\mathcal{C}$ with a chain $C$ such that $v,u\in C$.
Thus there is $z\in C\cap B_{p(x)}$.
Since $\mathcal C$ restricted to $Z^{\prec x}$ is a chain partition of $U[Z^{\prec x}]$,
we get $v \le_R z$ by inductive hypothesis.
As $v \le_R z$ and $(z,u)$ is Dilworth in $U(B_{p(x)} \cup B_x)$, $(v,u)$ is not deleted, and hence $v \le_R u$.

Let $\phi: X_w \to Z$ be a mapping defined $\phi(x) = (x,A_x)$.
Let $x,x' \in X_w$.
Clearly, $\phi(x) \le_R \phi(x')$ is equivalent to $(x,A_x) \le_R (x',A_{x'})$, which
yields $x \le_P x'$.
Hence, a chain partition of $R^{\ll}_w$ induces a chain partition of $P[X_w]$ into at most the same number of chains:
indeed, it is enough to assign $x \in X_w$ to a chain labeled $i$ if $\phi(x)=(x,A_x)$ is assigned to a chain $i$.

\begin{lemma}
The relation $R=(Z,\le_R)$ is a width $w$ poset, and for all $x,y\in X_w$ with $B_x\sqsubset_RB_y$:
\begin{enumerate}[label=(R\theenumi), series=Rconditions]
  \item \label{R:core} $R[B_{p(x)} \cup B_x]$   and $R[B_x \cup B_{s(x)}]$ are  cores;
  \item \label{R:transitivity} Suppose  $u\in B_x$, $v\in B_y$ and $u <_R v$. If $x\prec y$
  then  there is
   $v' \in B_{p(y)}$ with $u \le_R v' <_R v$; else
  there is $u' \in B_{s(x)}$ with $u <_{R} u' \le_R v$.
    \item $\mathcal B$ is a partition of $Z$ into maximum antichains. \label{R:sum}
  %\item $A_i \cap A_j = \emptyset$ for each $i\neq j$. \label{R:disjoint}
  \item \label{R:linear} $\mathcal B$ is a chain in $\sqsubseteq_R$.
  \item $R[B_x,B_y]$ is a core.\label{core+}
  \item Let $z\in X_w$ be such that $B_x \sqsubseteq_R B_z \sqsubseteq_R B_y$
  and suppose that for every $z' \in X_w$ such that $B_z \sqsubseteq_R B_{z'} \sqsubseteq_R B_y$
  ($B_x \sqsubseteq_R B_{z'} \sqsubseteq_R B_z$) we have $z \preceq z'$.
  Then, for all $u\in B_x$ and $v\in B_y$ with $u \le_R v$, there is $z'' \in B_z$ with $u \le_R z'' \le_R v$.\label{transitivity+}

\end{enumerate}
\end{lemma}

\begin{proof}
Conditions \ref{R:core} and \ref{R:transitivity} follow immediately from the definition of $R$.

First we prove that $R$ is a poset of width $w$.
As $R$ is obtained from the poset $U$ by removing some non-loops, $R$ is reflexive and antisymmetric.
For transitivity, argue by induction on $\prec$.
Suppose $u<_Rv<_Rw$.
Then there are distinct $x,y,z\in X^w$ with $u\in B_x$, $v\in B_y$, $w\in B_z$, and $B_x\sqsubset_R B_y \sqsubset_R B_z$.
Let $s={\prec}\mbox{-}\!\max\{x,y,z\}$.
If $s=y$ then using \ref{R:transitivity} there are $v'\in B_{p(y)}$ and $v''\in B_{s(y)}$ such that $u\le_Rv'\le_Rv$,  $v\le_Rv''\le_Rw$, $(v',v)$ is Dilworth in $U[B_{p(y)} \cup B_{y}]$ and $(v,v'')$ is Dilworth in $U[B_{y} \cup B_{s(y)}]$;
thus $(v',v'')$ is Dilworth in $U[B_{p(y)} \cup B_{s(y)}]$, $v' <_R v''$, and $u <_R w$ by induction.
The other two cases are similar, but easier.
Thus $R$ is a poset.
As no Dilworth edges are removed from $U$ to form $R$,
$\width(R)=\width(U)=w$.
Thus \ref{R:sum} and \ref{R:linear} also hold.

We prove \ref{core+} by induction on $\prec$.
Assume $x \prec y$.
The case $y\prec x$ is dual.
By \ref{R:core}, $R[B_{p(y)} \cup B_y]$ is a core.
If $x=p(y)$ we are done.
Otherwise, $R[B_x \cup B_{p(y)}]$ is a core by induction.
Thus $R[B_x,B_y]$ is a core by definition of $\le_R$.
So, \ref{core+} holds.

We prove \ref{transitivity+} by induction on $\prec$.
Suppose $u\in B_x$ and $v\in B_y$ with $u\le_R v$.
The claim holds if $z = x$ or $z = y$.
Suppose $z \in X_w$ is such that $B_x \sqsubset_R B_z \sqsubset_R B_y$ and
$z \preceq z'$ for any $B_z \sqsubseteq_R B_{z'} \sqsubseteq_R B_y$.
Suppose $x \prec y$.
By \ref{R:transitivity}, there is $w'\in B_{p(y)}$ with $u\leqslant_Rw'<_Rv$.
If $z=p(y)$ we are done.
Otherwise, as $z\prec p(y)$, there is $w\in B_z$ with $u <_R w <_R w'<_R y$ by induction, and hence \ref{transitivity+} holds.
Suppose $y \prec x$.
By \ref{R:transitivity}, there is $w'\in B_{s(x)}$ with $u \leqslant_R w' <_R v$.
If $z=s(x)$ we are done.
Otherwise, as $B_{s(x)} \sqsubset_{R} B_z \sqsubseteq_R B_y$,
there is $w\in B_z$ with $u <_R w' <_R w <_R y$ by induction.
Thus \ref{transitivity+} holds.
\end{proof}

In \cite{BK-FOCS,BK15} a width $w$ poset $R$ is defined to be \emph{regular} if it has a partition $\mathcal B$ satisfying \ref{R:core}--\ref{R:linear}.
An example of a regular poset is given in Figure \ref{figure:regexample}.

\begin{figure}[tbh]
\begin{center}
\begin{tikzpicture} % [scale=.9]

% start point
\path (0,0) coordinate (R0); % core I

% verts
\foreach \i in {1,...,4}{
\path (R0) ++(\i,5) coordinate (f\i);
\fill (f\i) circle (3pt);
\path (R0) ++(\i,4) coordinate (e\i);
\fill (e\i) circle (3pt);
%\path (R0) ++(\i,3) coordinate (d\i);
%\fill (d\i) circle (3pt);
\path (R0) ++(\i,2) coordinate (c\i);
\fill (c\i) circle (3pt);
%\path (R0) ++(\i,1) coordinate (b\i);
%\fill (b\i) circle (3pt);
\path (R0) ++(\i,0) coordinate (a\i);
\fill (a\i) circle (3pt);
}

% edges
\foreach \i in {1,...,4}{ \draw (a\i) -- (f\i);}

\draw (e4) -- (f3);
\draw (e4) -- (f2);
\draw (e4) -- (f1);
\draw (e3) -- (f4);
\draw (e2) -- (f3);
\draw (e1) -- (f2);

\draw (c4) -- (e3);
\draw (c3) -- (e4);
\draw (c3) -- (e2);
\draw (c2) -- (e1);
\draw (c1) -- (e3);
\draw (c1) -- (e2);

\draw (a4) -- (c3);
\draw (a3) -- (c4);
\draw (a2) -- (c1);
\draw (a1) -- (c2);

\draw (a3) -- (c2);
\draw (a2) -- (c3);

% label
\draw (f1) ++ (-1/4,-1/4) rectangle ++ (3.5,1/2);
\path (f1) ++ (-.6,0)node{$A_{3}$};
\path (f4) ++ (.6,0)node{$\phantom{A_{3}}$};
\draw (e1) ++ (-1/4,-1/4) rectangle ++ (3.5,1/2);
\path (e1) ++ (-.6,0)node{$A_{1}$};
%\draw (d1) ++ (-1/4,-1/4) rectangle ++ (3.5,1/2);
%\path (d1) ++ (-.6,0)node{$A_{5}$};
\draw (c1) ++ (-1/4,-1/4) rectangle ++ (3.5,1/2);
\path (c1) ++ (-.6,0)node{$A_{2}$};
%\draw (b1) ++ (-1/4,-1/4) rectangle ++ (3.5,1/2);
%\path (b1) ++ (-.6,0)node{$A_{6}$};
\draw (a1) ++ (-1/4,-1/4) rectangle ++ (3.5,1/2);
\path (a1) ++ (-.6,0)node{$A_{4}$};

% start point
\path (5,0) coordinate (R0); % core I

% verts
\foreach \i in {1,...,4}{
\path (R0) ++(\i,5) coordinate (f\i);
\fill (f\i) circle (3pt);
\path (R0) ++(\i,4) coordinate (e\i);
\fill (e\i) circle (3pt);
\path (R0) ++(\i,3) coordinate (d\i);
\fill (d\i) circle (3pt);
\path (R0) ++(\i,2) coordinate (c\i);
\fill (c\i) circle (3pt);
%\path (R0) ++(\i,1) coordinate (b\i);
%\fill (b\i) circle (3pt);
\path (R0) ++(\i,0) coordinate (a\i);
\fill (a\i) circle (3pt);
}

% edges
\foreach \i in {1,...,4}{ \draw (a\i) -- (f\i);}

\draw (e4) -- (f3); \draw (e4) -- (f2); \draw (e4) -- (f1); \draw (e3) -- (f4); \draw (e2) -- (f3); \draw (e1) -- (f2);

\draw (c4) -- (e3); \draw (c3) -- (e4); \draw (c3) -- (e2); \draw (c2) -- (e1); \draw (c1) .. controls +(20:1) and +(260:1) .. (e3); \draw (c1) -- (e2);

\draw (a4) -- (c3); \draw (a3) -- (c4); \draw (a2) -- (c1); \draw (a1) -- (c2);

\draw (a3) -- (c2); \draw (a2) -- (c3);

% label
\draw (f1) ++ (-1/4,-1/4) rectangle ++ (3.5,1/2);
\path (f1) ++ (-.6,0)node{$A_{3}$};
\path (f4) ++ (.6,0)node{$\phantom{A_{3}}$};
\draw (e1) ++ (-1/4,-1/4) rectangle ++ (3.5,1/2);
\path (e1) ++ (-.6,0)node{$A_{1}$};
\draw (d1) ++ (-1/4,-1/4) rectangle ++ (3.5,1/2);
\path (d1) ++ (-.6,0)node{$A_{5}$};
\draw (c1) ++ (-1/4,-1/4) rectangle ++ (3.5,1/2);
\path (c1) ++ (-.6,0)node{$A_{2}$};
%\draw (b1) ++ (-1/4,-1/4) rectangle ++ (3.5,1/2);
%\path (b1) ++ (-.6,0)node{$A_{6}$};
\draw (a1) ++ (-1/4,-1/4) rectangle ++ (3.5,1/2);
\path (a1) ++ (-.6,0)node{$A_{4}$};

% start point
\path (10,0) coordinate (R0); % core I

% verts
\foreach \i in {1,...,4}{
\path (R0) ++(\i,5) coordinate (f\i);
\fill (f\i) circle (3pt);
\path (R0) ++(\i,4) coordinate (e\i);
\fill (e\i) circle (3pt);
\path (R0) ++(\i,3) coordinate (d\i);
\fill (d\i) circle (3pt);
\path (R0) ++(\i,2) coordinate (c\i);
\fill (c\i) circle (3pt);
\path (R0) ++(\i,1) coordinate (b\i);
\fill (b\i) circle (3pt);
\path (R0) ++(\i,0) coordinate (a\i);
\fill (a\i) circle (3pt);
}

% edges
\foreach \i in {1,...,4}{ \draw (a\i) -- (f\i);}

\draw (e4) -- (f3); \draw (e4) -- (f2); \draw (e4) -- (f1); \draw (e3) -- (f4); \draw (e2) -- (f3); \draw (e1) -- (f2);

\draw (c4) -- (e3); \draw (c3) -- (e4); \draw (c3) -- (e2); \draw (c2) -- (e1); \draw (c1) .. controls +(20:1) and +(260:1) .. (e3); \draw (c1) -- (e2);

\draw (b4) -- (c3); \draw (b3) -- (c4); \draw (b2) -- (c1); \draw (b1) -- (c2);

\draw (a3) -- (c2); \draw (a2) -- (c3);

% label
\draw (f1) ++ (-1/4,-1/4) rectangle ++ (3.5,1/2); \path (f1) ++ (-.6,0)node{$A_{3}$}; \path (f4) ++ (.6,0)node{$\phantom{A_{3}}$}; \draw (e1) ++ (-1/4,-1/4) rectangle ++ (3.5,1/2); \path (e1) ++ (-.6,0)node{$A_{1}$}; \draw (d1) ++ (-1/4,-1/4) rectangle ++ (3.5,1/2); \path (d1) ++ (-.6,0)node{$A_{5}$}; \draw (c1) ++ (-1/4,-1/4) rectangle ++ (3.5,1/2); \path (c1) ++ (-.6,0)node{$A_{2}$}; \draw (b1) ++ (-1/4,-1/4) rectangle ++ (3.5,1/2); \path (b1) ++ (-.6,0)node{$A_{6}$}; \draw (a1) ++ (-1/4,-1/4) rectangle ++ (3.5,1/2); \path (a1) ++ (-.6,0)node{$A_{4}$};

\end{tikzpicture}
\end{center}
\caption{Regular poset $P^{\ll}$ with $P=(\bigcup_{i=1}^{6}A_i, \leq_P)$
and the presentation order given by $A_1 \ll A_2 \ll \ldots \ll A_6$ (the order $\ll$ inside each $A_i$ is arbitrary).}
\label{figure:regexample}
\end{figure}
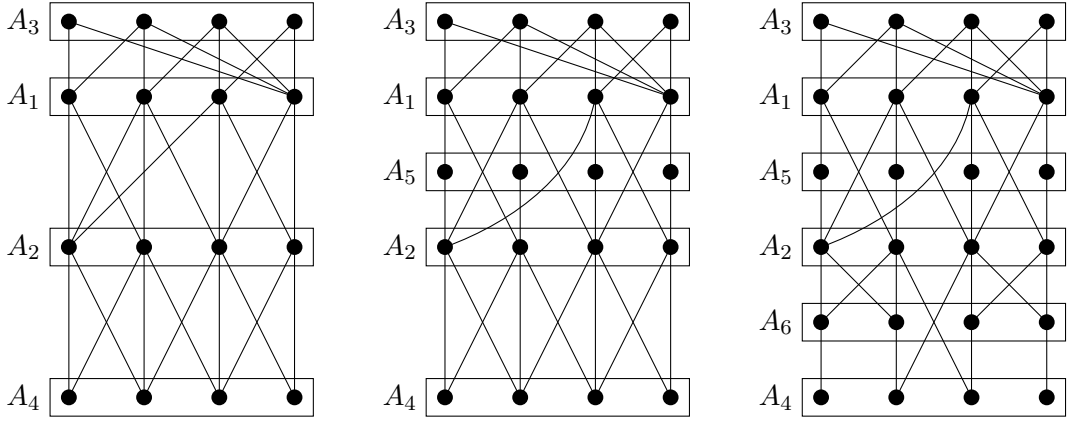

\section{$R$ is $L_{2w^2+1}$-ladder-free}
\label{sec:R-is-ladder-free}

In this section we prove that $R \in \Forb_w(L_{2w^2+1})$, which yields Lemma \ref{ladobs} as a corollary.
For any $u \in R$, let $B(u)$ be the antichain with $u\in B(u)\in \mathcal B$.
Consider an arbitrary $m$-ladder  $L=L(x_1\dots x_m;y_1\dots y_m)$ in $R$.
Call $L$ \emph{canonical} if $B(y_i)\sqsubseteq_R B(x_{i+1})$ for all $i\in[m-1]$.

\begin{prop} \label{claim:canonical-ladder}
If $L=L(x_1\dots x_m;y_1\dots y_m)\subseteq R$ is canonical then $m\le w$.
\end{prop}
\begin{proof}
See Figure~\ref{Ugrow}.
As $\width (R)\le w$, it suffices to show by induction on $i$ that
\begin{equation} \label{eq:canonical-ladder}
\abs{U_R [y_1] \cap B(y_i)} \ge i \text{ for } i \in [m].
\end{equation}

The base step $i=1$ holds, since $y_1\in U_R [y_1] \cap B(y_1) $, so assume $1 < i \le m$.
As $L$ is  canonical, $B(y_{i-1}) \sqsubseteq_R B(x_{i})$.
Thus there is $z\in B(y_{i-1})$ such that $z\le_Rx_i\le_Ry_i$.
Since $y_1 \parallel_R x_i$, we have $y_1 \parallel_R z$.
Thus $z\notin S := U_R[y_1] \cap B(y_{i-1})$.
By  induction, $\abs{S} \ge i-1$.
By~\ref{core+}, $R[B(y_{i-1}) \cup B(y_i)]$ is a core with Dilworth edge $zy_i$.
Let $\mathcal C$ be a Dilworth partition  of $R[B(y_{i-1}) \cup B(y_i)]$ with $z$
and $y_{i}$ in the same chain.
Each vertex of $S$ is matched in $\mathcal{C}$ to a distinct
vertex of $B(y_i)$, different than $y_i$ (see Figure~\ref{Ugrow}) as $z\notin S$.
Consequently, $\abs{U_R[y_1] \cap B(y_i)} \ge \abs{S} +1 \ge i-1+1 =i $.
This proves \eqref{eq:canonical-ladder}.
\end{proof}

\begin{figure}[tbh]
\begin{center}
\begin{tikzpicture}%[scale=.9]
%verts of Y1
\path (2,0) coordinate (y1);
\fill (y1) circle (3pt)
node[left]{$y_1$};

%verts of Yi-1
\path (1,1.5) coordinate (z);
\fill (z) circle (3pt)
node[left]{$z$};
\path (2,1.5) coordinate (yi1);
\fill (yi1) circle (3pt)
node[right]{$y_{i-1}$};
\path (3,1.5) coordinate (u1);
\fill (u1) circle (3pt);
\path (4,1.5) coordinate (u2);
\fill (u2) circle (3pt);
\path (5,1.5) coordinate (u3);
\fill (u3) circle (3pt);

%verts of Zi
\path (1,2.5) coordinate (zi);
\fill (zi) circle (3pt)
node[left]{$x_i$};

%verts Yi
\path (2,3.5) coordinate (yi);
\fill (yi) circle (3pt)
node[right]{$y_{i}$};
\path (3,3.5) coordinate (v1);
\fill (v1) circle (3pt);
\path (4,3.5) coordinate (v2);
\fill (v2) circle (3pt);
\path (5,3.5) coordinate (v3);
\fill (v3) circle (3pt);
\path (6,3.5) coordinate (v4);
\fill (v4) circle (3pt);

%Y1 path
\draw (-.4,-.35) -- (-.4,.35) -- (6.4,.35) --node[right,black]{$B(y_1)$} (6.4,-.35)
-- (-.4,-.35);

%Yi-1 path
\draw (-.4,1.15) -- (-.4,1.85) -- (6.4,1.85) --node[right,black]{$B(y_{i-1})$}
(6.4,1.15) -- (-.4,1.15);

%S path
\draw (1.8,1.3) -- (1.8,1.7) -- (5.2,1.7) --node[right,black]{$S$} (5.2,1.3) --
(1.8,1.3);

%Zi path
\draw (-.4,2.15) -- (-.4,2.85) -- (6.4,2.85) --node[right,black]{$B(x_i)$}
(6.4,2.15) -- (-.4,2.15);

%Yi path
\draw (-.4,3.15) -- (-.4,3.85) -- (6.4,3.85) --node[right,black]{$B(y_i)$}
(6.4,3.15) -- (-.4,3.15);

%S' path
\draw (2.8,3.3) -- (2.8,3.7) -- (6.2,3.7) -- (6.2,3.3) -- (2.8,3.3);

%comps
\draw (y1) -- (yi1) -- (yi)[thick];
\draw (yi1) -- (v1)[thick];
\draw (y1) -- (u1) -- (v2)[thick];
\draw (y1) -- (u2) -- (v3)[thick];
\draw (y1) -- (u3) -- (v4)[thick];
\draw (z) -- (zi) -- (yi)[thick];
\end{tikzpicture}
\end{center}
\caption{The intersections of $U_P [y_1]$ with $B(y_{i-1})$ and
$B(y_i)$.}\label{Ugrow}
\end{figure}
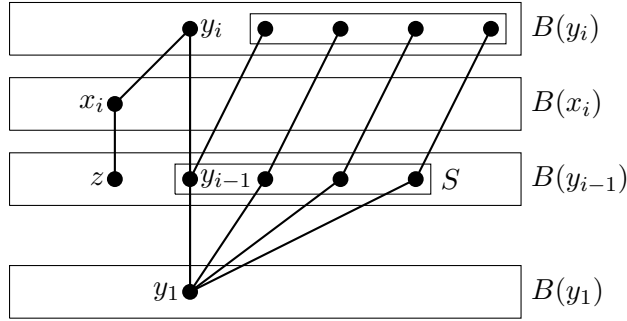

\begin{prop} \label{claim:non-canonical-ladder}
If $L=L(x_1\dots x_m;y_1\dots y_m)\subseteq R$ with $m\ge2w+1$ then $B(y_1) \sqsubseteq_RB(x_{2w+1})$.
\end{prop}
\begin{proof}
See Figure~\ref{khits}.
Assume to the contrary that $B(x_{2w+1}) \sqsubset_RB(y_1)$.
It follows that
$$B(x_1) \sqsubset_R \ldots \sqsubset_R B(x_{2w+1}) \sqsubset_R B(y_1) \sqsubset_R  \ldots \sqsubset_R B(y_{2w+1}).$$
Let $z \in X_w$ be the $\prec$-least index with $B(x_{w+1})
\sqsubseteq_R B_z \sqsubseteq_R B(y_{w+1})$.
%, i.e., $A_t$ is the earliest presented antichain among these in $[A(x_{w+1}),A(y_{w+1})]_P$.
If $B_z \sqsubset_RB(y_1)$ then set $I=[w+1]$; else
%$B(x_{2w+1})\sqsubset_RB_z$ then
set $I=\{w+1,\dots,2w+1\}$. Regardless, $|I|=w+1$, and by \ref{transitivity+}, there are $z_i$ with $x_i<_Rz_i<_Ry_i$ for all $i\in I$
(see Figure~\ref{khits}).
As $|B_z|=w$ there are $i,j\in I$ with $i<j$ and $z_i=z_j$. Then $x_j<_Rz_j=z_i<_Ry_i$, a contradiction with $x_j \parallel_R y_i$.
\end{proof}

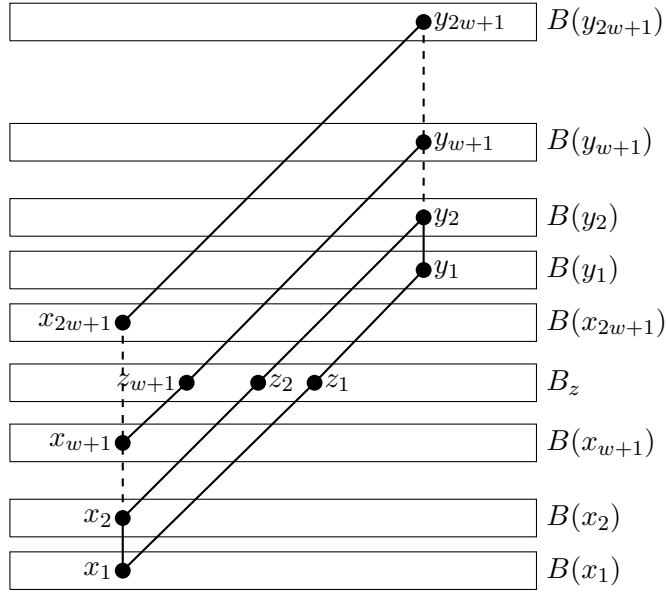
\begin{figure}[tbh]
\begin{center}
\begin{tikzpicture}%[scale=.9]
%\draw (0,2.1) -- (0,2.6) -- (7,2.6) --node[right,black]{$B(y_j)$}(7,2.1) --(0,2.1);

\path (5.5,7.55) coordinate (y2w);
\fill (y2w) circle (3pt)
node[right]{$y_{2w+1}$};

\path (5.5,5.95) coordinate (yw);
\fill (yw) circle (3pt)
node[right]{$y_{w+1}$};

\path (5.5,4.95) coordinate (y2);
\fill (y2) circle (3pt)
node[right]{$y_2$};

\path (5.5,4.25) coordinate (y1);
\fill (y1) circle (3pt)
node[right]{$y_1$};

\draw (0,7.3) -- (0, 7.8) -- (7, 7.8) --node[right,black]{$B(y_{2w+1})$} (7,7.3) -- (0,7.3);

\draw (0,5.7) -- (0, 6.2) -- (7, 6.2) --node[right,black]{$B(y_{w+1})$} (7,5.7) -- (0,5.7);

\draw (0,4.7) -- (0, 5.2) -- (7, 5.2) --node[right,black]{$B(y_2)$} (7,4.7) -- (0,4.7);

\draw (0,4) -- (0, 4.5) -- (7, 4.5) --node[right,black]{$B(y_1)$} (7,4) -- (0,4);

\path (1.5,3.55) coordinate (x2w);
\fill (x2w) circle (3pt)
node[left]{$x_{2w+1}$};

\path (1.5,1.95) coordinate (xw);
\fill (xw) circle (3pt)
node[left]{$x_{w+1}$};

\path (1.5,0.95) coordinate (x2);
\fill (x2) circle (3pt)
node[left]{$x_2$};

\path (1.5,0.25) coordinate (x1);
\fill (x1) circle (3pt)
node[left]{$x_1$};

\path (4.05,2.75) coordinate (z1);
\fill (z1) circle (3pt)
node[right]{$z_1$};

\path (3.3,2.75) coordinate (z2);
\fill (z2) circle (3pt)
node[right]{$z_2$};

\path (2.35,2.75) coordinate (zw);
\fill (zw) circle (3pt)
node[left]{$z_{w+1}$};

\draw (0,3.3) -- (0, 3.8) -- (7, 3.8) --node[right,black]{$B(x_{2w+1})$} (7,3.3) -- (0,3.3);

\draw (0,2.5) -- (0, 3.0) -- (7, 3.0) --node[right,black]{$B_z$} (7,2.5) -- (0,2.5);

\draw (0,1.7) -- (0, 2.2) -- (7, 2.2) --node[right,black]{$B(x_{w+1})$} (7,1.7) -- (0,1.7);

\draw (0,0.7) -- (0, 1.2) -- (7, 1.2) --node[right,black]{$B(x_2)$} (7,0.7) -- (0,0.7);

\draw (0,0) -- (0, 0.5) -- (7, 0.5) --node[right,black]{$B(x_1)$} (7,0) -- (0,0);

\draw (x1) -- (z1)[thick];
\draw (z1) -- (y1)[thick];

\draw (x2) -- (z2)[thick];
\draw (z2) -- (y2)[thick];

\draw (xw) -- (zw)[thick];
\draw (zw) -- (yw)[thick];

\draw (x2w) -- (y2w)[thick];

\draw (x1) -- (x2)[thick];

\draw (x2) -- (x2w)[thick, dashed];

\draw (y1) -- (y2)[thick];

\draw (y2) -- (y2w)[thick, dashed];

\end{tikzpicture}
\caption{The ladder $L$ and the antichain $B_z$.}\label{khits}
\end{center}
\end{figure}
%The case $B(x_{2w+1}) \sqsubset B_z$ can be handled similarly.

\begin{lemma}\label{eladder}
$R \in Forb(L_{2w^2+1})$. 
\end{lemma}
\begin{proof}
Suppose $L=L(x_1\dots x_{2w^2+1};y_1\dots y_{2w^2+1})\subseteq R$.
By Proposition~\ref{claim:non-canonical-ladder}, for any $i,j \in [2w^2+1]$ with $j-i \ge 2w$ we must have $B(y_i) \sqsubseteq_R B(x_j)$.
Thus, the subposet induced by the vertices $$\bigcup_{0 \le i \le w} \{ x_{2wi+1} , y_{2wi+1} \}$$ is a canonical ladder with $w+1$ rungs,
which contradicts Proposition~\ref{claim:canonical-ladder}.
\end{proof}

\section{First-Fit on ladder-free posets}
\label{sec:firstfit}
In this section we prove $\val_{\FF} (L_m, w) \le  w^{2.5 \lg (2w) + 2 \lg m}$ for $m, w \in \mathbb Z^+$, which shows Lemma~\ref{ladupbound}.
This proof was already published in \cite{KSm}, here we present its shortened version to keep the paper self-contained. 
Consider a Grundy coloring $\gr{g}$ of a poset $P$.
Let $C = \{x_1,\ldots,x_k\}$ be a chain of $P$ such that $\gr g(x_1) < \dots < \gr g(x_k)$;
we call $C$ \emph{ascending} if $x_1<_P x_2 <_P \ldots <_P x_k$ (see Figure~\ref{figure:ascending_ladder}) and we call $C$ \emph{descending} if $x_1 >_P x_2 >_P \ldots >_P x_k$.
% Call a chain $C$ of $P$  \emph{ascending} (see Figure \ref{figure:ascending_ladder}) if $x\le_Ry~\mathrm{iff} ~ \gr g(x)\le\gr g(y)$ for all $x,y\in C$ and \emph{descending} if $x\le_Ry~\mathrm{iff} ~\gr{g}(x) \ge \gr{g}(y)$ for all $x,y\in C$.

The next two propositions are the combinatorial tools for the upcoming arguments in the proof of Lemma~\ref{ladupbound}.
The first one is just a restatement of the Erd\H{o}s-Szekeres Theorem and the second one presents conditions for ascending and descending chains in a poset with forbidden ladder.

\begin{prop}
\label{claim:erdosz-szekeres}
Consider a poset $P$ and its Grundy coloring $\gr{g}$.
Let $C$ be a chain in $P$ such that all $\gr g(c)$ for $c\in C$ are distinct.
% Let $s,t \in \mathbb{Z}^+$ and let $C$ be a chain in $P$.
If the length of every ascending subchain of $C$ is at most $s$ and the length of every descending subchain $C$ is at most $t$, then $|C| \le  st$.
\end{prop}
\begin{prop}\label{claim:ascending_ladders}
Suppose $P\in \Forb(L_m,w)$ and $w\geq2$.
Let $x_1<\ldots<x_k$ be an ascending (resp. let $x_1 > \ldots > x_k$ be a descending) chain in $P$ and for each $i \in [k-1]$ let
$y_i$ be a $\gr{g}(x_i)$-witness for $x_{i+1}$.
Then for all $i,j$ with $1\le i<j\le k$,
\begin{enumerate}[label=(C\theenumi)]
 \item\label{cond:1} $x_i<_P y_i$ (resp. $x_i >_P y_i$);
 \item\label{cond:2} $y_i \not >_P x_j$ (resp. $y_i \not <_P x_j$); and
 \item\label{cond:3} if $y_h \parallel_P x_k$ for all $h\in [k-1]$, then $k \le m(w-1)$.
\end{enumerate}
\end{prop}
\begin{proof}
We consider the case that $x_1<\ldots<x_k$ is an ascending chain; the other case can be proved analogically.
Observe that
\ref{cond:1}
and \ref{cond:2} follow immediately from definitions (see Figure \ref{figure:ascending_ladder}).
To show \ref{cond:3} assume to the contrary that $k > m(w-1)$. Then $k \geq (m-1)(w-1)+2$ as $w \geq 2$.
The subposet  $P_0:=P[\{y_1,\ldots,y_{k-1}\}]$ has width at most $w-1$ as
$y_h\parallel_P x_k$ for all $h \in [k-1]$ by  hypothesis. Since $|P_0|\ge(m-1)(w-1)+1$, there is a chain $y_{i_1}<_P \ldots<_Py_{i_m}$ in $P_0$,
by Dilworth's Theorem.
By \ref{cond:2}, ${i_1} < \ldots < {i_m}$. Thus by \ref{cond:1} and hypothesis, $P[\{x_{i_1},y_{i_1}, \ldots, x_{i_m}, y_{i_m}\}]$
is an $m$-ladder, contradicting $P \in \Forb(L_m,w)$.
\end{proof}

\begin{figure}[tbh]
\begin{center}
\begin{tikzpicture}%[scale=.7]

%lines for chain a and b
\draw (0,-.25) node[below] {$P_{\gr g(x_1)}$} -- (0,4.5);
\draw (2,-.25) node[below] {$P_{\gr g(x_2)}$} -- (2,4.5);
\draw (4,-.25) node[below] {$P_{\gr g(x_3)}$} -- (4,4.5);
\draw (6,-.25) node[below] {$P_{\gr g(x_4)}$} -- (6,4.5);
\draw (8,-.25) node[below] {$P_{\gr g(x_5)}$} -- (8,4.5);
\draw (10,-.25) node[below] {$P_{\gr g(x_6)}$} -- (10,4.5);

\draw (0,0) -- (10, 4);

\path (0,0) coordinate (x1);
\path (2,.8) coordinate (x2);
\path (4,1.6) coordinate (x3);
\path (6,2.4) coordinate (x4);
\path (8,3.2) coordinate (x5);
\path (10,4) coordinate (x6);

\fill (x1) circle (3pt) node[below=3,right=1,black] {$x_1$};
\fill (x2) circle (3pt) node[below=3,right=1,black] {$x_2$};
\fill (x3) circle (3pt) node[below=3,right=1,black] {$x_3$};
\fill (x4) circle (3pt) node[below=3,right=1,black] {$x_4$};
\fill (x5) circle (3pt) node[below=3,right=1,black] {$x_5$};
\fill (x6) circle (3pt) node[below=3,right=1,black] {$x_6$};

\path (0,1) coordinate (y1);
\path (2,1.8) coordinate (y2);
\path (4,2.6) coordinate (y3);
\path (6,3.4) coordinate (y4);
\path (8,4.2) coordinate (y5);
%\path (5,5) coordinate (y6);

\fill (y1) circle (3pt) node[left,black] {$y_1$};
\fill (y2) circle (3pt) node[left,black] {$y_2$};
\fill (y3) circle (3pt) node[left,black] {$y_3$};
\fill (y4) circle (3pt) node[left,black] {$y_4$};
\fill (y5) circle (3pt) node[left,black] {$y_5$};
%\fill (y6) circle (3pt) node[left,black] {$y_6$};

%incomparabilities
\draw (x2) -- (y1)[dashed];
\draw (x3) -- (y2)[dashed];
\draw (x4) -- (y3)[dashed];
\draw (x5) -- (y4)[dashed];
\draw (x6) -- (y5)[dashed];

\end{tikzpicture}
\end{center}
\caption{The ascending chain $x_1<_P\ldots<_Px_6$. The point $y_i$ is a $\gr{g}(x_i)$-witness of $x_{i+1}$.}
\label{figure:ascending_ladder}
\end{figure}
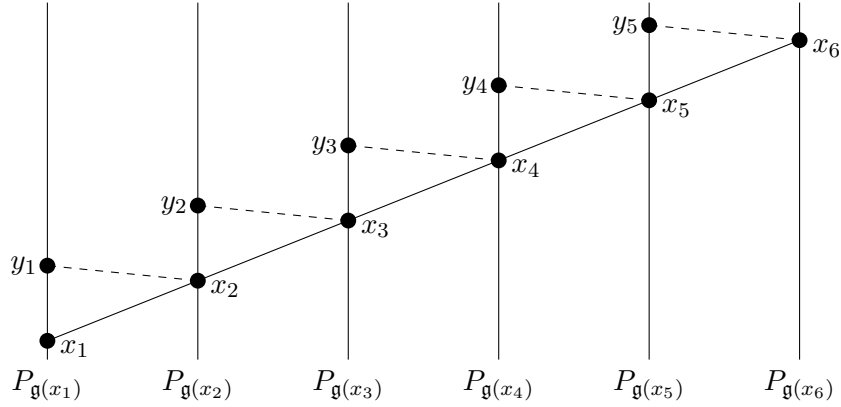

%\ladupbound*
\begin{proof}[Proof of Lemma~\ref{ladupbound}]
%%
%% start upper ladder bound
%%
We  argue by induction on $w=\width(P)$.
The base step $w=1$ is trivial. 
Now fix $w$, and assume the lemma holds for all smaller values of $w$.

Let $P=(V,\le_P)$ be a poset of width $w$ such that $P \in \Forb (L_m)$, let $\gr{g}: V \to [n]$ be an $n$-Grundy coloring of $P$
with $n = \val_{\FF}(L_m, w)$, and let $C_i = g^{-1}(i)$.
We must show that $n \le w^{2.5 \lg (2w) + 2 \lg m}.$

Pick a maximum antichain  $A\in\mathcal V_P$ with $N:=\min_{a\in A}\gr g(a)$ maximum, i.e.,
$$ N=\min_{a\in A}\gr g(a)= \max_{B \in \mathcal V_P} \min_{b \in B} \gr{g}(b).$$
Then $H:=P[C_{N+1} \cup \dots \cup C_n]$ has width at most
$w-1$. As $\gr g-N$ is a Grundy coloring of $H$,
%By  induction, at most $\val_{\FF} (L_m,w-1)$ colors appear on this subposet.  Hence,
\begin{equation}
n \le N + \val_{\FF} (L_m, w-1). \label{nbound}
\end{equation}

Let $D=D(A)$ and $U=U(A)$.
As $A$ is a maximum antichain, $P=D\cupdot A\cupdot U$.
Call a vertex $x$ \emph{special} if $|I_P(x)\cap A|\ge w/2$.
For $i\in [N-1]$, set $q_i^-=\max ( C_i\cap D )$ and $q_i^+=\min ( C_i \cap U )$.
Hence $q_i^{-}$ and $q_i^{+}$ are consecutive on $C_i$.
Each $a\in A$ has an $i$-witness, and so satisfies $q_i^-\parallel_P a$ or $q_i^+\parallel_P a$.
Thus, $q_i^-$ or $q_i^+$ is special.
Pick a special vertex $q_i\in\{q_i^-,q_i^+\}$ and pick $r_i\in C_i$ so that $r_i$ is a minimal special vertex in $C_i$ if $q_i \in D$, and $r_i$ is a maximal special vertex in $C_i$ if $q_i \in U$ (it might happen that $r_i=q_i$).
Call $q_i$ the \emph{near witness} and $r_i$ the \emph{far witness}.
Set $R = \{ r_1, \ldots , r_{N-1} \}$.
The next claim completes our recursion for $\val_{FF}(L_m,w)$.
\begin{claim} \label{maxint}
$|S| \le \frac{1}{2} m^2w (w-1)^2 \val_{\FF} (L_m, \lfloor w/2 \rfloor )$ for all chains $S$ with $S\subseteq R\cap D$ or $S\subseteq R\cap U$.
\end{claim}
\begin{proof}Let $S\subseteq R\cap U$; the case $S\subseteq R\cap D$ is dual.
Recall that all $\gr g(r_i)$ are distinct.
It gives that also all $\gr g(s)$, for $s\in S$, are distinct.
Therefore, by Proposition~\ref{claim:erdosz-szekeres}, it suffices to show:
\begin{enumerate}[label=(T\theenumi)]
 \item\label{asc:T} the size of any ascending chain in $S$ is at most $m(w-1)$; and
 \item\label{desc:T} the size of any descending chain in $S$ is at most $\frac{w}{2} m (w-1)\val_{\FF} (L_m, \lfloor w/2 \rfloor )$.
\end{enumerate}

For \ref{asc:T}, let $x_1<_P\ldots<_Px_k$ be any ascending chain in $S$.
For each $i\in [k-1]$, pick a $\gr{g}(x_i)$-witness $y_i$ for $x_{i+1}$.
Using Proposition~\ref{claim:ascending_ladders}, \ref{cond:2} implies
$y_i \not>_P x_k$.
Suppose $y_i <_P x_k$.
By \ref{cond:1}, $x_i<_Py_i$.
Then $y_i \in U$ and $y_i$ is special as $x_k$ is special.
As $\gr{g}(x_i)=\gr g(y_i)$ this contradicts the choice of $x_i$ as a far witness.
Thus $y_i\parallel_P x_k$.
By \ref{cond:3}, $|S|=k\le m(w-1)$.

For \ref{desc:T}, let $S'=\{z_1>_P\ldots>_Pz_k\}$ be a descending chain in $S$, and set $P'=P[D_P(z_1)\cap U]$.
If $B$ is an antichain in $P'$ then $(A-D(z_1))\cup B$ is an antichain in $P$.
As $z_1$ is special, $|B| \le w/2$.
So
\begin{equation}
 \width( P') \le w/2.
\end{equation}
For  $i \in [k]$, let $w_i$ be the near witness for color $\gr{g}(z_i)$.
Note that $w_i \in P'$ since $w_i\leq z_i\leq z_1$.
By Dilworth's Theorem, there is a chain $T\subseteq\{w_1,\ldots,w_k\}$ with
$k \le \frac{w}{2} |T|$.
Each $w_i$ has different Groundy color, thus
by Proposition~\ref{claim:erdosz-szekeres}, it suffices to prove:
\begin{enumerate}[resume*]
 \item\label{desc:X} the size of any descending sequence in $T$ is at most $m(w-1)$,
 \item\label{asc:X} the size of any ascending sequence in $T$ is at most $\val_{\FF}(L_m, \lfloor \frac{w}{2} \rfloor )$.
\end{enumerate}

For \ref{desc:X}, let $s_1>_P \ldots>_P s_l$ be a descending chain in $T$.
For $1\le i\le l-1$, pick  a $\gr{g}(s_i)$-witness $t_i$  of $s_{i+1}$.
Using Proposition~\ref{claim:ascending_ladders},  \ref{cond:2} implies
$t_i \not<_P s_l$. Suppose $s_l <_P t_i$.
Then $t_i \in U$ and $t_i$ is also special as $t_i<_Ps_i$ by \ref{cond:1}.
As $\gr{g}(t_i)=\gr g(s_i)$ this contradicts the choice of $s_i$ as a near witness.
Thus $t_i\parallel_P s_l$.
By \ref{cond:3}, $|T|=l\le m(w-1)$.

For \ref{asc:X}, let $u_1 <_P \ldots <_P u_l$ be an ascending chain in $T$,
and for $i \in [l]$ let $v_i\in S'$ be the far $\gr{g}(u_i)$-witness.
Then $u_{l} \le_P v_{l}$.
As $S'$ is descending, $u_1 <_P \dots <_P u_l \le_P v_l  <_P \dots <_P v_1$.
Set $U_i = [u_i,v_i] \cap C_{\gr{g}(u_i)}$, $U' = \bigcup_{i=1}^{l} U_i$ and $P''=P'[U']$.
Define $\gr g': U' \rightarrow [l]$ by $\gr{g}'(x) = i \text{ iff } x \in U_i $.
Then $\gr{g}'$ is an $l$-Grundy coloring of $P''$: as $\gr g$ is a Grundy coloring,  if $i<j$ and $y\in U_j$ then there is $x\in C_i$ with $x\parallel_P y$; as $u_i<_Pu_j\le_P y\le_P v_j<_Pu_i$, we have $x\in U_i$.
Since $P''\subset P$ is $L_m$-free, $l \le \val_{\FF}(L_m, \lfloor w/2 \rfloor )$.
\end{proof}

Consider a Dilworth chain decompositions of $R$ and let $S$ be a chain with a maximum size in this decomposition.
Since the width of $R$ is at most $w$, we have
\[
  N-1 = |R| \leq w|S| = w|S\cap D| + w |S\cap U|.
\]
After applying Claim~\ref{maxint} we get
\[
 N \leq 1 + m^2w^2(w-1)^2\val_{\FF}(L_m, \lfloor w/2\rfloor ) \leq m^2w^4\val_{\FF}(L_m, \lfloor w/2\rfloor ).
\]
The equation~\eqref{nbound} with $n = \val_{\FF} (L_m, w)$ can be now rewrite into the following recursion
%
% Using Dilworth chain decompositions of $R\cap U$ and of $R\cap D$ with Claim~\ref{maxint} and the equation~\eqref{nbound} with $n = \val_{\FF} (L_m, w)$, we have
\begin{flalign*}
\val_{\FF} (L_m, w) &\le m^2 w^4 \val_{\FF} \left( L_m, \left\lfloor w/2 \right\rfloor \right)+\val_{\FF} (L_m,w-1).
\end{flalign*}
Applying this recursion repeatedly to the second term, with $\val_{\FF}(L_m,1)=1$, we obtain
\begin{flalign*}
\val_{\FF} (L_m, w) &
\le 1+\sum_{2 \le k \le w} m^2 k^4 \val_{\FF} \left( L_m, \left\lfloor k/ 2 \right\rfloor \right)
\le w m^2 w^4 \val_{\FF} \left(L_m, \left\lfloor w/2 \right\rfloor \right).
\end{flalign*}
Arguing by induction yields:
\begin{flalign*}
 \val_{\FF} (L_m, w) &\le m^{2 \lg w} w^{2.5\lg (2w) }\le  w^{2.5 \lg (2w) + 2 \lg m },
\end{flalign*}
which completes the proof of the lemma.
%For more details, see \cite{KSm}. \hfill\qedhere
%Hence $\val_{\FF} (L_m, w) \le  w^{2.5 \lg (2w) + 2 \lg m }$.
\end{proof}

%Check:
%\begin{align}
%(2w)^{2.5 \lg (2w) + 2 \lg m +2.5}&= w^{2.5( 1+\lg w) + 2 \lg m +2.5}\cdot 2^{2.5 (1+\lg w) + 2 \lg m +2.5} \\
%&= w^{2.5 \lg w + 2 \lg m +2.5}\cdot w^{2.5}\cdot w^{2.5}\cdot m^2\cdot 2^5\\
%%&=w^{2.5 \lg w + 2 \lg m }\cdot w^5\cdot m^2\cdot 2^{2.5}\\
%&\ge \val_{FF}(L_m,w)\cdot (2w)^5\cdot m^2\\
%&\ge \val_{FF}(L_m,2w)
%\end{align}

%\valbound*
%\begin{proof}[Proof of Theorem~\ref{valbound}] %%% Main Thm
%Every regular poset of width $w$ is in $\Forb(L_{2w^2+1})$ by Lemma \ref{eladder}, and thus by the previous lemma is colored
%with at most
%$$w^{2.5 \lg (2w) + 2 \lg (2w^2+1)} = w^{6.5 \lg w + O(1)}$$ colors by First-Fit.
%The algorithm asserted by Lemma \ref{lemma:reduction}, together
%with First-Fit coloring of regular posets, colors every on-line poset of width $w$  with at most
%$$w\cdot w^{6.5 \lg w +O(1)} = w^{6.5 \lg w + O(1)}$$
%colors.
%\end{proof} %%% Main Thm

\section{Limitations of Our Methods}
\label{sec:limitations}

Loosely speaking, two major parts of the proof of our main theorem rely on limiting the number of rungs in a ladder within a regular poset and the performance of First-Fit on the family $\Forb (L_m)$.
Here, we will show that our general upper bound for the online coloring problem cannot be greatly improved with our current methods.

In the first part of the section we show that the assertion of Lemma \ref{eladder} can not be improved.
Although $L_{2w^2+1}$ is not a subposet of any width $w$ regular poset, we show that there are regular posets of width $w$ that contain ladders whose number of rungs is quadratic in $w$.

\begin{lemma} \label{regladderlower}
For each integer $w \ge 2$, there is a regular poset $P^{\ll}$ so that
$\width(P)=w$ and $P$ contains $L_{w \lfloor (w+2) / 2 \rfloor }$ as an induced subposet.
\end{lemma}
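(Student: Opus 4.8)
The plan is to build $P$ explicitly as a \emph{layered} poset and to present its layers in $\sqsubseteq_P$-increasing order, a choice that makes the verification of the regular-poset axioms almost automatic and isolates the real work in the combinatorial design of the layers. Concretely, I would partition $V$ into $w$-element antichains $\ell_1,\dots,\ell_N$ (the \emph{levels}), specify for each consecutive pair a width-$w$ bipartite \emph{core} from $\ell_t$ to $\ell_{t+1}$, and define $\le_P$ to be the transitive closure of the union of these consecutive-level relations. Taking the presentation order to be $\ell_1,\ell_2,\dots,\ell_N$, every newly presented level is $\sqsubseteq_P$-maximal, so $s(i)$ is always undefined and $p(i)=i-1$.

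Under this setup the axioms collapse to almost nothing. Condition \ref{R:core} reduces to the requirement that each $P[\ell_t\cup\ell_{t+1}]$ be a core, which holds by construction, since the transitive closure adds no comparability between consecutive levels (a closure path gaining one net level must be a single step). Condition \ref{R:transitivity} reduces to its ``otherwise'' branch, which holds automatically because any $x<_P y$ in a layered transitive closure factors through a vertex of the level immediately below $A(y)$. Conditions \ref{R:sum}--\ref{R:linear} are immediate: the levels partition $V$ and are linearly $\sqsubseteq_P$-ordered, the latter because each core matches every vertex of $\ell_t$ below a vertex of $\ell_{t+1}$. Finally $\width(P)=w$, since each level is a $w$-antichain and composing the Dilworth partitions of the successive cores (exactly as in the proof of Lemma~\ref{lemma:reduction}) partitions $P$ into $w$ chains. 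Thus the whole statement is reduced to a design problem: choose the consecutive-level cores so that the layered transitive closure contains an induced copy of $L_{w\lfloor(w+2)/2\rfloor}$.

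For the design I would build the ladder in $h=\lfloor w/2\rfloor+1$ \emph{phases}, each a canonical sub-ladder of length $w$ — the maximum length permitted by Proposition~\ref{claim:canonical-ladder} — realized on a short staircase of levels in which $x_i$ and $y_{i-1}$ share a level and $A(y_i)=A(x_{i+1})$. The phases are then concatenated into a single ladder by $h-1$ \emph{resets}: after the top rung $(x,y)$ of a phase, the first lower-leg vertex $x'$ of the next phase is placed on a level strictly between $A(x)$ and $A(y)$, so that $A(x')\sqsubset_P A(y)$ and the combined ladder is non-canonical, while still $x<_P x'$, $y<_P y'$ and $y\parallel_P x'$. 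Arranging the phases so that consecutive ones overlap in a common band of levels, each shared level receives roughly two ladder vertices per overlapping phase; the width bound $\abs{\ell_t}=w$ then limits the construction to $h=\lfloor w/2\rfloor+1$ phases, giving $wh=w\lfloor(w+2)/2\rfloor$ rungs.

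The main obstacle is meeting three competing requirements in the layer design at once: (i) every consecutive-level bipartite poset must be a genuine core (balanced, every edge in a perfect matching); (ii) the transitive closure must produce exactly the ladder comparabilities $x_i<_P y_j$ for $i\le j$; and (iii) it must create none of the forbidden comparabilities $y_i<_P x_j$ for $i<j$, all while holding $\width(P)=w$. The tension is that the staircase wiring gives the relevant lower-leg vertices out-degree two, which a naive choice leaves outside every perfect matching and hence outside every core. I would resolve this by padding each level with a controlled set of filler vertices and realizing each core as a union of two perfect matchings — an ``identity'' matching together with a ``shift'' — so that every edge provably lies in a perfect matching while the fillers absorb the matchings without introducing spurious comparabilities among ladder vertices. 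Carrying out this bookkeeping uniformly across all levels, and checking that the $h$ phases overlap exactly as the width budget allows, is the technical heart of the argument; once it is in place, reading off the induced $L_{w\lfloor(w+2)/2\rfloor}$ and re-invoking the reductions above completes the proof.
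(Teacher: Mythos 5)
Your verification of the regularity axioms is correct, but it proves too much, and this is where the proposal breaks. With the levels presented in $\sqsubseteq_P$-increasing order you have $p(j)=j-1$ for every $j$, and iterating your own observation about condition \ref{R:transitivity} shows the resulting poset is \emph{fully graded}: every comparability $x<_P y$ factors through \emph{every} level between $A(x)$ and $A(y)$, not merely through some neighbour at presentation time. General regular posets do not have this property, and the paper's construction is deliberately non-graded: there $A_1$ is presented first and each later antichain is inserted strictly between two previously presented ones (via the $S_k$ and $T_k$ cores), so \ref{R:transitivity} ties new points only to their at-presentation neighbours. That non-gradedness is exactly what allows all $w$ rungs of one block to straddle the single antichain $A_1$ (the lower legs lie entirely below $A_1$, the upper legs entirely above, so the $w$ crossings consume all of $A_1$ and no width is spent on leg ``streams''), and the $\lfloor (w+2)/2\rfloor$ blocks are then stacked in disjoint bands of levels --- not overlapped, as your phase plan requires.

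Full gradedness is fatal to the design problem you defer to the ``technical heart,'' and one can see this already at $w=2$, where your statement needs an induced $L_4$ but in fact no fully graded regular poset of width $2$ contains even $L_3$. Suppose $x_1,x_2,x_3,y_1,y_2,y_3$ induce $L_3$. If $A(x_3)\sqsubseteq_P A(y_1)$, gradedness places on the two-point antichain $A(x_3)$ the point $x_3$ together with points $z_1,z_2$ satisfying $x_i\le_P z_i\le_P y_i$ for $i=1,2$; these are pairwise distinct (equalities would give $x_2\le_P y_1$, $x_3\le_P y_1$, or $x_3\le_P y_2$), a contradiction. So $A(y_1)\sqsubset_P A(x_3)$. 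Now take graded chains from $x_2$ up to $x_3$ and from $x_2$ up to $y_2$, and let $\ell_t$ be the last level on which they share a point $v$; their successors $p,q\in\ell_{t+1}$ are distinct, with $p\le_P x_3$ and $x_2\le_P q\le_P y_2$. If $A(y_1)\sqsubseteq_P\ell_t$, a graded chain from $y_1$ to $y_3$ meets $\ell_t$ in a point $u\neq v$, so $\ell_t=\{v,u\}$ and $\ell_{t+1}=\{p,q\}$; by \ref{R:core} the comparability $v\le_P q$ must lie in a perfect matching of $P[\ell_t\cup\ell_{t+1}]$, which forces $u\le_P p$ and hence $y_1\le_P x_3$, a contradiction. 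If instead $\ell_t\sqsubset_P A(y_1)$, then (both chains being defined at the level $A(y_1)$, since $A(y_1)\sqsubset_P A(x_3)$ and $A(y_1)\sqsubset_P A(y_2)$, and never again sharing a point) the level $A(y_1)$ contains $y_1$ plus two further distinct points, exceeding width $2$. So the approach is unsound as stated: your ``fillers plus two matchings'' remedy has no room to operate at $w=2$, and for general $w$ the sketch never carries out the accounting it would need --- each rung in flight, the lower-leg stream, and the up-sets of the already-created $y_i$'s must occupy pairwise distinct points of every level they cross, while the perfect-matching requirement of \ref{R:core} keeps forcing additional points of each level into $U_P(y_i)$, eating the width budget. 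To salvage anything along these lines you must abandon bottom-to-top presentation, which is precisely the step where your proposal diverges from the paper.
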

\begin{proof}
Consider two antichains $A=\{u_1,\ldots, u_w\}$ and $B=\{v_1,\ldots,v_w\}$,
where $u_i$ and $v_i$ are the $i$-th elements of $A$ and $B$, respectively.
We say $(A,B,\le)$ is a core of:
\begin{itemize}
 \item \emph{type $I$} if for all $i,j \in [w]$
 $$u_i \le v_j \text{ iff } i=j,$$
 \item \emph{type $S_k$} for $k \in [w]$ if for all $i,j \in [w]$
 $$u_i \le v_j \text{ iff $i=j$ or ($i=1$ and $j \in [k]$) or ($i \in [2,k]$ and $j \in [i-1,i]$)},$$
 \item \emph{type $T_k$} for $k \in [w]$ if for all $i,j \in [w]$
 $$\text{$u_i \le v_j$ iff $i=j$ or  ($i \in [w-k+1,w]$ and $j=w$) or ($i \in [w-k]$ and $j \in [i-1,i]$).}$$
\end{itemize}
% It can be easily checked that bipartite posets $I$, $S_k$, and $T_k$ are cores -- see Figure \ref{figure:types_of_cores}.
It is straightforward to verify that bipartite posets of types $I$, $S_k$ and $T_k$ are cores.
See Figure~\ref{figure:types_of_cores} for examples.

\begin{figure}[tbh]
\begin{center}
\begin{tikzpicture}%[scale=.9]
%start points
\path (0,5) coordinate (R3);
\path (0,2.5) coordinate (R2);
\path (0,0) coordinate (R1);
%%core I
%verts
\foreach \i in {1,...,6}{
\path (R3) ++(\i,1) coordinate (y3\i);
\fill (y3\i) circle (3pt) node[above]{$v_\i$};
\path (R3) ++(\i,0) coordinate (x3\i);
\fill (x3\i) circle (3pt) node[below]{$u_\i$};}

%edges
\foreach \i in {1,...,6}{
\draw (x3\i) -- (y3\i);}

%label
\path (x31) ++ (-3/2,1/2)node{$I = S_1 = T_1$};
\path (x36) ++ (3/2,1/2)node{$\phantom{I = S_1 = T_1}$};

%%core S
%verts
\foreach \i in {1,...,6}{
\path (R2) ++(\i,1) coordinate (y2\i);
\fill (y2\i) circle (3pt) node[above]{$v_\i$};
\path (R2) ++(\i,0) coordinate (x2\i);
\fill (x2\i) circle (3pt) node[below]{$u_\i$};}

%edges
\foreach \i in {2,...,6}{
\draw (x2\i) -- (y2\i);
\draw (x21) -- (y2\i);}
\draw (x21) -- (y21);
\draw (x22) -- (y21);
\draw (x23) -- (y22);
\draw (x24) -- (y23);
\draw (x25) -- (y24);
\draw (x26) -- (y25);

%label
\path (x21) ++ (-1,1/2)node{$S_6$};
\path (x26) ++ (1,1/2)node{$\phantom{S_6}$};

%%core T
%verts
\foreach \i in {1,...,6}{
\path (R1) ++(\i,1) coordinate (y1\i);
\fill (y1\i) circle (3pt) node[above]{$v_\i$};
\path (R1) ++(\i,0) coordinate (x1\i);
\fill (x1\i) circle (3pt) node[below]{$u_\i$};}

%edges
\foreach \i in {1,...,6}{
\draw (x1\i) -- (y1\i);}
\foreach \i in {3,...,5}{
\draw (y16) -- (x1\i);}
\draw (x14) -- (y13);
\draw (x15) -- (y14);
\draw (x16) -- (y15);

%label
\path (x11) ++ (-1,1/2)node{$T_4$};
\path (x16) ++ (1,1/2)node{$\phantom{T_4}$};
\end{tikzpicture}
\end{center}
\caption{Hasse diagrams of $I$, $S_6$, and $T_4$ for $w=6$.}\label{figure:types_of_cores}
\end{figure}
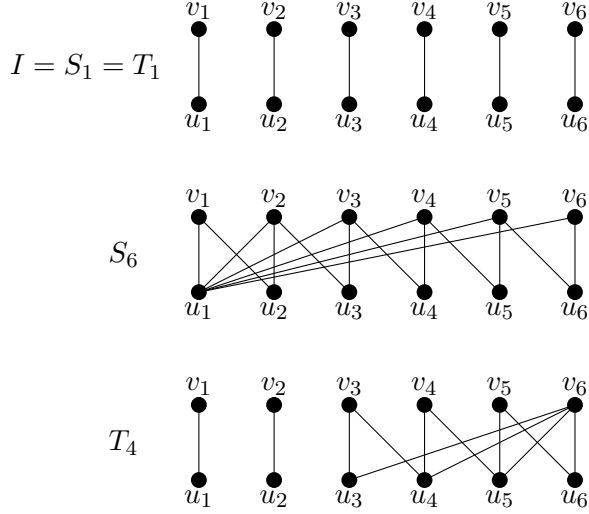

Now we construct an auxiliary regular poset $Q^{\ll}$, based on which the regular poset $P^{\ll}$ is built.
Let $V = \bigcup_{i=1}^{2w+1} A_i$, and let the presentation order of $Q^{\ll}$ be $A_1 \ll A_2 \ll \ldots \ll A_{2w+1}$.
The poset $Q=(V,\leq_Q)$ is defined as follows.
At every step in the presentation of $Q$, every two $\sqsubseteq_Q$-consecutive antichains
induce a core, one of type: $I$, $S_k$ or $T_k$ for $k \in [w]$.
Suppose that the antichains $A_{s(i)}$ and $A_{p(i)}$, if exist, denote the antichains
that are respectively just above and just below $A_i$ at the moment $A_i$ is presented.
To define the relation $\le_Q$ in $Q$ we need only to determine the relation $\le_Q$ between
$A_i$ and $A_{s(i)}$ and between $A_{p(i)}$ and $A_i$ at the moment $A_i$ is presented; the other comparabilities will follow by transitivity -- see \ref{R:transitivity}.
Below are the rules how $\le_Q$ is determined for the successively presented antichains $A_1,\ldots,A_{2w+1}$:
\begin{enumerate}
\item $A_2$ is set so that
\begin{itemize}
\item $s(2)=1$ and $(A_2,A_1,\le_Q)$ is of type $S_w$,
\item $p(2)$ is not defined.
\end{itemize}
\item For $i \in [3,w+1]$ the antichain $A_i$ is set so that % (see Figure TODO):
\begin{itemize}
\item $s(i)=1$ and $(A_{i},A_{1},\le_Q)$ is of type $S_{w-i+2}$,
\item $p(i)=i-1$ and $(A_{i-1},A_{i}, \le_Q)$ is of type $I$.
\end{itemize}
\item $A_{w+2}$ is set so that
\begin{itemize}
\item $s(w+2)$ is not defined,
\item $p(w+2)=1$ and $( A_1, A_{w+2},\le_Q)$ is of type $T_w$.
\end{itemize}
\item For $i \in [w+3,2w+1]$ the antichain $A_i$ is set so that % (see Figure TODO):
\begin{itemize}
\item $s(i)=i-1$ and $(A_{i},A_{i-1},\le_Q)$ is of type $I$,
\item $p(i)=1$ and $(A_{1},A_{i},\le_Q)$ is of type $T_{2w-i+2}$.
\end{itemize}
\end{enumerate}
The above rules imply the following relations between the antichains $A_1,\ldots,A_{2w+1}$ in the poset $Q$
(see Figure \ref{figure:reg_poset_Q}):
$$A_2 \sqsubset_Q A_3 \sqsubset_Q \ldots \sqsubset_Q A_{w+1} \sqsubset_Q A_1 \sqsubset_Q A_{2w+1} \sqsubset_Q A_{2w} \sqsubset_Q \ldots \sqsubset_Q A_{w+2}.$$
Although it is tedious to verify that $Q^{\ll}$ is indeed a width $w$ regular poset, it is straightforward and we leave it to the reader.

\noindent Let $\bot = A_2$, $\top=A_{w+2}$.
For every $i \in [w]$ we denote by:
\begin{itemize}
 \item $x_i$ -- the first point in $A_{i+1}$,
 \item $y_i$ -- the $w$-th point in $A_{2w+2-i}$,
 \item $b_i$ -- the $i$-th point in $\bot$,
 \item $t_i$ -- the $i$-th point in $\top$,
\end{itemize}
and finally we let $X=\{x_1,\ldots,x_w\}$ and $Y = \{y_1,\ldots,y_w\}$. %, see Figure \ref{figure:reg_poset_Q}.
By inspection we may easily check the following properties of $Q$.
\begin{enumerate}[label=(P\theenumi), series=Pconditions]
 \item \label{P:chains} $x_1 <_Q \ldots <_Q x_w$ and $y_1 <_Q \ldots <_Q y_w$.
\end{enumerate}
Moreover, for any $i,j \in [w]$:
\begin{enumerate}[resume*=Pconditions]
 \item \label{P:ladders} If $i \le j$ then $x_i <_Q y_j$, otherwise $x_i \parallel_Q y_j$.
 \item \label{P:tops_and_bottoms} If $j \le i \le j+2$ or $i=1$ or $j=w$ then $b_i \le_Q t_j$, otherwise $b_i \parallel_Q t_j$.
 \item \label{P:Y} If $j=w$ then $y_i <_Q t_j$, otherwise $y_i \parallel_Q t_j$.
 \item \label{P:X} If $i=1$ then $b_i <_Q x_j$, otherwise $b_i \parallel_Q x_j$.
\end{enumerate}

Now, we are ready to describe the regular poset $P^{\ll}$.
The poset $P$ will consists of $h = \lfloor (w+2)/2 \rfloor$ copies of $Q$.
We will use the same variable names to denote elements (sets) in the copies of $Q$ in $P$ as these introduced \mbox{for $Q$;}
however, we add the superscript $i$ to specify that a variable describes an element (a set) from the $i$-th copy of $Q$.
Formally, the poset $P=(V,\le_P)$ is defined such that $V = \bigcup_{i=1}^{h} V^{i}$ and $\le_P$ is the transitive closure of
$$(\le_{Q^1} \cup \ldots \cup \le_{Q^h}) \cup \{(t^{j}_i,b^{j+1}_i): i \in [w], j\in [h-1]\}.$$
The presentation order $\ll$ of $P$ is set so as:
\begin{enumerate}[label=(\roman*)]
\item\label{first} $V^{i} \ll V^{j}$ for any $1\leq i<j \leq h$,
\item\label{second} the order of the elements within every copy of $Q$ is the same as in $Q$.
\end{enumerate}
Again, checking that $P^{\ll}$ is a regular poset of width $w$ is straightforward;
an example of $P^{\ll}$ is shown in Figure~\ref{figure:reg_poset_Q}.
%where antichains in $\mathcal A$ (and all other elements in every copy of $Q$) are enumerated by one index so that condition \ref{first} is true.

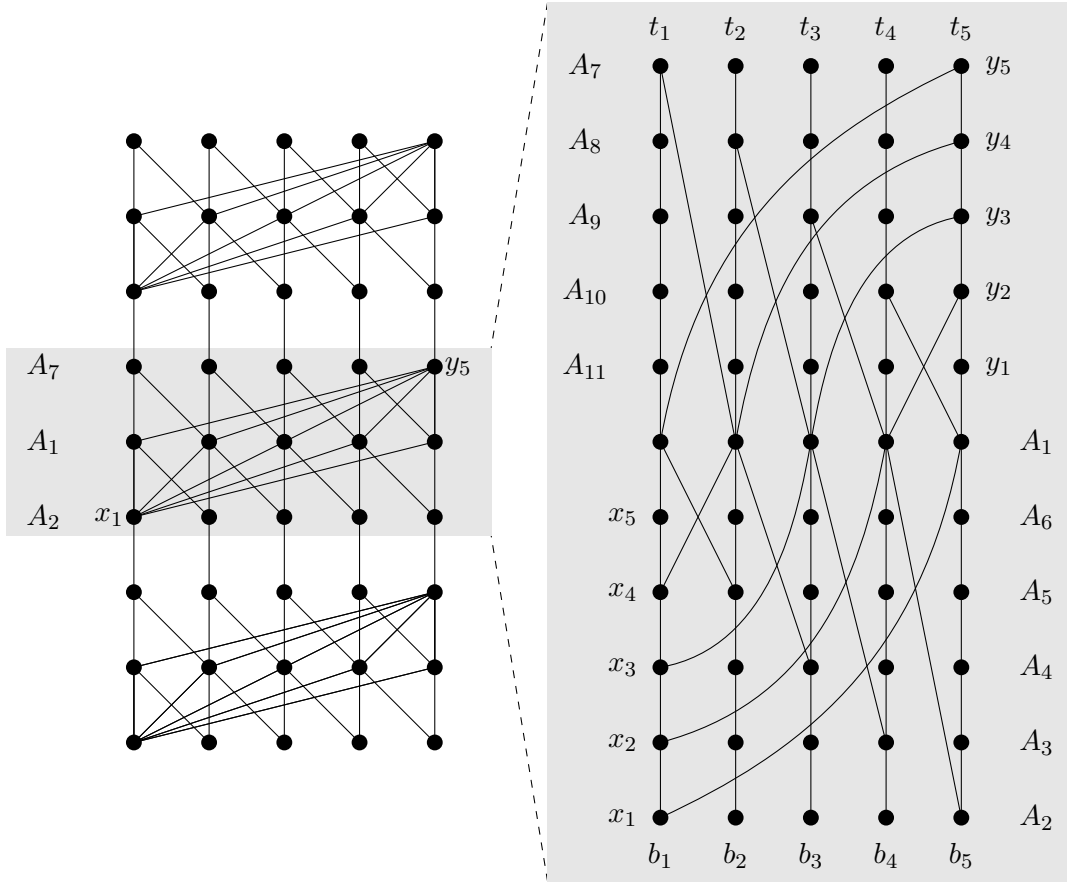
\begin{figure}[htb]
\begin{center}
\begin{tikzpicture}%[scale=.9]
%outline box
\draw[dashed] (-.5,-.85) -- (-1.25,3.75);
%\draw[dashed] (6.5,-.25) -- (-1.5,3.75);
%\draw[dashed] (-.5,-.25) -- (-7.5,3.75);
%\draw[dashed] (-.5,10.25) -- (-7.5,6.25);
%\draw[dashed] (6.5,10.25) -- (-1.5,6.25);
\draw[dashed] (-.5,10.85) -- (-1.25,6.25);
\fill[gray!20] (6.5,-.85) -- (-.5,-.85) --  (-.5,10.85) -- (6.5,10.85) -- cycle;
\fill[gray!20] (-1.25,3.75) -- (-7.7,3.75) -- (-7.7,6.25) -- (-1.25,6.25) -- cycle;

%start point
\path (-7,1) coordinate (R0);
%%core I
%verts
\foreach \i in {1,...,5}{
\path (R0) ++(\i,8) coordinate (j\i);
\fill (j\i) circle (3pt);
\path (R0) ++(\i,7) coordinate (h\i);
\fill (h\i) circle (3pt);
\path (R0) ++(\i,6) coordinate (g\i);
\fill (g\i) circle (3pt);
\path (R0) ++(\i,5) coordinate (f\i);
\fill (f\i) circle (3pt);
\path (R0) ++(\i,4) coordinate (e\i);
\fill (e\i) circle (3pt);
\path (R0) ++(\i,3) coordinate (d\i);
\fill (d\i) circle (3pt);
\path (R0) ++(\i,2) coordinate (c\i);
\fill (c\i) circle (3pt);
\path (R0) ++(\i,1) coordinate (b\i);
\fill (b\i) circle (3pt);
\path (R0) ++(\i,0) coordinate (a\i);
\fill (a\i) circle (3pt);}

%edges
\foreach \i in {1,...,5}{
\draw (a\i) -- (j\i);}
\foreach \i in {1,...,5}{
\draw (c5) -- (b\i);
\draw (a1) -- (b\i);}

\foreach \i in {1,...,5}{
\draw (c5) -- (b\i);
\draw (a1) -- (b\i);}

\foreach \i in {1,...,5}{
\draw (f5) -- (e\i);
\draw (d1) -- (e\i);}

\foreach \i in {1,...,5}{
\draw (j5) -- (h\i);
\draw (g1) -- (h\i);}

\draw (h2) -- (j1);
\draw (h3) -- (j2);
\draw (h4) -- (j3);
\draw (h5) -- (j4);

\draw (g2) -- (h1);
\draw (g3) -- (h2);
\draw (g4) -- (h3);
\draw (g5) -- (h4);

\draw (e2) -- (f1);
\draw (e3) -- (f2);
\draw (e4) -- (f3);
\draw (e5) -- (f4);

\draw (d2) -- (e1);
\draw (d3) -- (e2);
\draw (d4) -- (e3);
\draw (d5) -- (e4);

\draw (b2) -- (c1);
\draw (b3) -- (c2);
\draw (b4) -- (c3);
\draw (b5) -- (c4);

\draw (a2) -- (b1);
\draw (a3) -- (b2);
\draw (a4) -- (b3);
\draw (a5) -- (b4);

%label
% \path (j1) ++ (-1.2,0)node{$A_{18}$};
% \path (j5)node[right]{$y_{15}$};
% \path (h1) ++ (-1.2,0)node{$A_{12}$};
% \path (g1) ++ (-1.2,0)node{$A_{13}$};
% \path (g1)node[left]{$x_{11}$};
\path (f1) ++ (-1.2,0)node{$A_{7}$};
\path (f5)node[right]{$y_{5}$};
\path (e1) ++ (-1.2,0)node{$A_{1}$};
\path (d1) ++ (-1.2,0)node{$A_{2}$};
\path (d1)node[left]{$x_{1}$};
% \path (c1) ++ (-1.2,0)node{$A_{7}$};
% \path (c5)node[right]{$y_{5}$};
% \path (b1) ++ (-1.2,0)node{$A_{1}$};
% \path (a1) ++ (-1.2,0)node{$A_{2}$};
% \path (a1)node[left]{$x_{1}$};

%start points
\path (0,0) coordinate (R1);

%verts
\foreach \i in {1,...,5}{
\path (R1) ++(\i,10) coordinate (a11\i);
\fill (a11\i) circle (3pt);}
\foreach \i in {1,...,5}{
\path (R1) ++(\i,9) coordinate (a10\i);
\fill (a10\i) circle (3pt);}
\foreach \i in {1,...,5}{
\path (R1) ++(\i,8) coordinate (a9\i);
\fill (a9\i) circle (3pt);}
\foreach \i in {1,...,5}{
\path (R1) ++(\i,7) coordinate (a8\i);
\fill (a8\i) circle (3pt);}
\foreach \i in {1,...,5}{
\path (R1) ++(\i,6) coordinate (a7\i);
\fill (a7\i) circle (3pt);}
\foreach \i in {1,...,5}{
\path (R1) ++(\i,5) coordinate (a6\i);
\fill (a6\i) circle (3pt);}
\foreach \i in {1,...,5}{
\path (R1) ++(\i,4) coordinate (a5\i);
\fill (a5\i) circle (3pt);}
\foreach \i in {1,...,5}{
\path (R1) ++(\i,3) coordinate (a4\i);
\fill (a4\i) circle (3pt);}
\foreach \i in {1,...,5}{
\path (R1) ++(\i,2) coordinate (a3\i);
\fill (a3\i) circle (3pt);}
\foreach \i in {1,...,5}{
\path (R1) ++(\i,1) coordinate (a2\i);
\fill (a2\i) circle (3pt);}
\foreach \i in {1,...,5}{
\path (R1) ++(\i,0) coordinate (a1\i);
\fill (a1\i) circle (3pt);}

%edges
\draw (a11) -- (a111);
\draw (a12) -- (a112);
\draw (a13) -- (a113);
\draw (a14) -- (a114);
\draw (a15) -- (a115);

\draw (a15) --(a64);
\draw (a24) --(a63);
\draw (a33) --(a62);
\draw (a42) --(a61);

\draw (a84) --(a65);
\draw (a93) --(a64);
\draw (a102) --(a63);
\draw (a111) --(a62);

\draw (a115) .. controls +(205:2) and +(80:3) .. (a61);
\draw (a105) .. controls +(195:2) and +(80:2) .. (a62);
\draw (a95) .. controls +(190:1.5) and +(80:1) .. (a63);
\draw (a85) --(a64);

\draw (a41) --(a62);
\draw (a31) .. controls +(10:1.5) and +(260:1) .. (a63);
\draw (a21) .. controls +(15:2) and +(260:2) .. (a64);
\draw (a11) .. controls +(25:2) and +(260:3) .. (a65);

%label
\path (a111) ++ (0,1/2)node{$t_{1}$};
\path (a112) ++ (0,1/2)node{$t_{2}$};
\path (a113) ++ (0,1/2)node{$t_{3}$};
\path (a114) ++ (0,1/2)node{$t_{4}$};
\path (a115) ++ (0,1/2)node{$t_{5}$};

\path (a111) ++ (-1,0)node{$A_{7}$};
\path (a115) ++ (1/2,0)node{$y_{5}$};

\path (a101) ++ (-1,0)node{$A_{8}$};
\path (a105) ++ (1/2,0)node{$y_4$};

\path (a91) ++ (-1,0)node{$A_{9}$};
\path (a95) ++ (1/2,0)node{$y_3$};

\path (a81) ++ (-1,0)node{$A_{10}$};
\path (a85) ++ (1/2,0)node{$y_2$};

\path (a71) ++ (-1,0)node{$A_{11}$};
\path (a75) ++ (1/2,0)node{$y_1$};

\path (a65) ++ (1,0)node{$A_{1}$};

\path (a51) ++ (-1/2,0)node{$x_{5}$};
\path (a55) ++ (1,0)node{$A_{6}$};

\path (a41) ++ (-1/2,0)node{$x_4$};
\path (a45) ++ (1,0)node{$A_{5}$};

\path (a31) ++ (-1/2,0)node{$x_3$};
\path (a35) ++ (1,0)node{$A_{4}$};

\path (a21) ++ (-1/2,0)node{$x_2$};
\path (a25) ++ (1,0)node{$A_{3}$};

\path (a11) ++ (-1/2,0)node{$x_1$};
\path (a15) ++ (1,0)node{$A_{2}$};

\path (a11) ++ (0,-1/2)node{$b_1$};
\path (a12) ++ (0,-1/2)node{$b_2$};
\path (a13) ++ (0,-1/2)node{$b_3$};
\path (a14) ++ (0,-1/2)node{$b_4$};
\path (a15) ++ (0,-1/2)node{$b_5$};

\end{tikzpicture}
\end{center}
\caption{%The construction of $P$ for $w=5$. %On the left side of the figure there is a block for $i=2$.
The width $5$ poset $Q$ is shown on the right.
The sketch of the construction of the width $5$ poset $P$ is shown on the left. 
It consists of $3$ copies of $Q$ (the middle copy of $Q$ in $P$ is depicted with gray background) joined as shown in the figure. 
}\label{figure:reg_poset_Q}
\end{figure}

To finish the proof of the lemma we show that
\begin{equation}
\label{eq:ladder_in_P}
\text{the set } \bigcup_{j=1}^{h} (X^{j} \cup Y^{j}) \text{ induces an $(w\cdot h)$-ladder in $P$,}
\end{equation}
with $x_i^{j}y_i^{j}$ being its $((j-1)h+i)$-th rung. Clearly, we have
\begin{equation}
\label{eq:chains_X_Y}
X^{1} <_P \ldots <_P X^{h} \text{ and } Y^1 <_P \ldots <_P Y^h
\end{equation}
by the definition of $\le_P$. Finally, we will show that for all $i,j \in [h]$:
\begin{equation}
\label{eq:sets_X_Y}
X^i <_P Y^j \text{ if } i<j \text{ and } X^i \parallel_Q Y^j \text{ if } i > j.
\end{equation}
Note that the relation between $X^{i}$ and $Y^{j}$ in the case when $i=j$ is handled by \ref{P:ladders}.
Clearly, if we prove \eqref{eq:sets_X_Y}, \eqref{eq:ladder_in_P} follows by \ref{P:chains}, \eqref{eq:chains_X_Y}, \eqref{eq:sets_X_Y}, and \ref{P:ladders}.
Assume that $i < j$.
Clearly, by \ref{P:ladders} it follows that $X^i$ is less than the greatest element in $Y^i$.
Consequently, $X^i <_P Y^j$ by \eqref{eq:chains_X_Y}.
Assume $i > j$.
We consider only the case $i=h$ and $j=1$; the remaining ones are even easier to prove.
First note that every comparability between a point in $Y_1$ and a point in $X_h$ needs to be
implied by transitivity on some point from $\bot^h$.
Note that $D_P(X_h) \cap \bot^h$ contains only
the first element of $\bot^h$ by \ref{P:X}.
By \ref{P:tops_and_bottoms} and \ref{P:Y}, note that the set
$U_P(Y_1) \cap A^{i}_2$ contains exactly $2i-3$ last elements in
$\bot^i$ for $i \in [2,h]$. Pluging $h = \lfloor (w+2)/2 \rfloor$ to the last observation we get
$U_P(Y_1) \cap \bot^{h}$ contains not more than $2 \lfloor(w+2)/2 \rfloor -3 \le w-1$ last elements from $\bot^h$.
In particular, $U_P(Y_1) \cap \bot^{h}$ does not contain the first element of $\bot^{h}$.
It follows that $X^h \parallel_P Y^1$.
\end{proof}
%The comparabilities shown in Claims \ref{rungs}, \ref{XYchains}, and \ref{nonrungs} prove the lemma.

In the last part of this section we give the lower bound on $\val_{\FF}(L_m,w)$, which shows that the upper bound from Lemma \ref{ladupbound} 
can not be substantially improved.
For the upcoming construction we remind the definition of the \emph{lexicographical product} of two posets.
For posets $P$ and $Q$, the lexicographical product $P \cdot Q$ is the poset
with vertices $\{ (p,q) : p \in P, q \in Q\}$ and order $\le_{P \cdot Q}$, where
$$(p_1,q_1) \le_{P \cdot Q} (p_2,q_2) \text{ if either } p_1 <_P p_2 \text{ or } (p_1 = p_2 \text{ and }q_1 \le_Q q_2).$$
Informally, we may think of $P \cdot Q$ as the poset $P$ where each vertex has been ``inflated'' to a copy of $Q$.
It is well know that
\begin{equation} \label{lexwidth}
\width(P\cdot Q) = \width(P)\width(Q).
\end{equation}
The following two simple properties (we left the proof for the reader) are the key in the proof of the upcoming lemma.
For $p,r \in P$ and $u,v,s \in Q$ we have:
\begin{align}
 &\textrm{If ($(p,u) \le_{P\cdot Q}(r,s)$ or $(p,u) \ge_{P\cdot Q}(r,s)$) and }(r,s) \parallel_{P \cdot Q} (p,v), \textrm{then } p=r. \label{comppar}\\
 &\textrm{If } (p,u) \le_{P \cdot Q} (r,s) \le_{P \cdot Q} (p,v), \textrm{then } p=r. \label{between}
\end{align}

\begin{lemma}
\label{laddnbound}
For $m, w \in \mathbb Z^+$ with $m >1$, we have  $w^{\lg (m-1)}/(m-1) \le \val_{\FF} (L_m, w).$
\end{lemma}
\begin{proof}%[Proof of Lemma~\ref{laddnbound}] %%%%%%%%%%%%%%%% ladder lower bound
Fix $m \in \mathbb Z^+$ with $m>1$. Let $R$ be the width $2$ poset $R_{m-1}$ as defined in the proof of Lemma~\ref{CE}.
For technical reasons we would like $R$ to have the least and the greatest element.
Vertex $x_1^{m-1}$ is already the greatest in $R$, but there is no least element in $R$.
Therefore we extend $R$ to $P$ by adding a new element $\hat 0$ which is below entire $R$.
The greatest element in $P$ is still $x_1^{m-1}$, which we denote by $\hat 1$.

It is a simple exercise to see that $P$ also satisfies the statement of Lemma~\ref{CE},
i.e., $\width(P) = 2$ and $\chi_{\FF}(P) \ge \chi_{\FF}(R) \ge m-1$. As $R$ is an induced subposet of $P$ we have $I_P(\hat 0) = \emptyset$ and $\abs{I_P (x_i^k)} = k <  m-1 $ for  $1\le i \le k < m-1$
and $\abs{I_P(x_i^{m-1})} = i-1<m-1$ for $i\in[m-1]$.
Observe that in a ladder $L_m$, the lowest vertex of the upper leg is always incomparable to $m-1$ vertices.
Hence, there is no vertex in $P$ that can serve as the lowest vertex of the upper leg of an $m$-ladder and thus
\begin{equation} \label{PForb}
P \in \Forb(L_m).
\end{equation}

We are prepared to build a poset $Q_k \in \Forb(L_m)$ with $n$-Grundy coloring so that $\width(Q_k) = 2^k$ and $n \ge (m-1)^k$.
Poset $Q_k$ is defined by the following rules:
\begin{enumerate}[label=(Q\theenumi)]
\item $Q_0$ is a single vertex $z$.
\item $Q_{k+1} = P \cdot Q_{k}$.
%\item[(P2)] $Q_2 = P.$
% \item If $w\ge2$ is even, $Q_w = P \cdot Q_{w/2}$.
% \item\label{Q:3} If $w\ge3$ is odd, $Q_w = P \cdot Q_{(w-1)/2} + Q_1$ along with a minimum vertex $\hat 0_w$ and a maximum vertex $\hat 1_w$.
%\item[(P3)] If $w\ge3$ is odd, $Q_w = Q_{w-1}$ along with vertex
\end{enumerate}
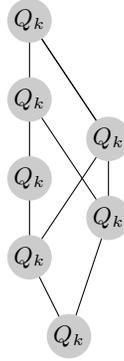
\begin{figure}[tbh]
\begin{center}
\begin{tikzpicture}[scale=.7]
%start points
\path (0,0) coordinate (R1);
% \path (6,0) coordinate (R2);

%%even case
%verts
\path (R1) ++(0,6) coordinate (x7);
\path (R1) ++(0,4.5) coordinate (x6);
\path (R1) ++(0,3) coordinate (x5);
\path (R1) ++(1.5,3.75) coordinate (x4);
\path (R1) ++(1.5,2.25) coordinate (x3);
\path (R1) ++(0,1.5) coordinate (x2);
\path (R1) ++(.75,0) coordinate (x1);

%edges
\draw (x2) -- (x7);
\draw (x4) -- (x7);
\draw (x3) -- (x4);
\draw (x1) -- (x3);
\draw (x1) -- (x2);

\draw (x2) -- (x4);

\draw (x4) -- (x7);
\draw (x3) -- (x6);

%big verts
\foreach \j in {1,...,7}{
\fill[gray!40] (x\j) circle (12pt) node[black]{\footnotesize $Q_{k}$};}

% %%odd case
% %verts
% \path (R2) ++(0,7.5) coordinate (y10);
% \path (R2) ++(4.5,3) coordinate (y9);
% \path (R2) ++(.75,-1) coordinate (y8);
%
% \path (R2) ++(0,6) coordinate (y7);
% \path (R2) ++(0,4.5) coordinate (y6);
% \path (R2) ++(0,3) coordinate (y5);
% \path (R2) ++(1.5,3.75) coordinate (y4);
% \path (R2) ++(1.5,2.25) coordinate (y3);
% \path (R2) ++(0,1.5) coordinate (y2);
% \path (R2) ++(.75,0) coordinate (y1);
%
% %edges
% \draw (y2) -- (y7);
% \draw (y4) -- (y7);
% \draw (y3) -- (y4);
% \draw (y1) -- (y3);
% \draw (y1) -- (y2);
%
% \draw (y7) -- (y10);
% \draw (y9) -- (y10);
% \draw (y8) -- (y9);
% \draw (y8) -- (y1);
%
% \draw (y2) -- (y4);
%
% \draw (y4) -- (y7);
% \draw (y3) -- (y6);
%
% %big verts
% \foreach \j in {1,...,7}{
% \fill[gray!40] (y\j) circle (16pt) node[black]{$Q_{\lfloor w/2 \rfloor}$};}
%
% \fill (y10) circle (3pt) node[above]{$\hat 1_w$};
% \fill (y9) circle (3pt) node[right]{$z$};
% \fill (y8) circle (3pt) node[below]{$\hat 0_w$};

%% 'cap' labels
% \draw (R1) ++(-.75,0)node[black]{even $w$};
% \draw (R2) ++(2.45,0)node[black]{odd $w$};
\end{tikzpicture}
\end{center}
\caption{Simplified Hasse diagram of  $Q_{k+1}$ with $m=4$.}\label{lexQ}
\end{figure}

Note that $Q_1$ and $P$ are isomorphic and so we will treat $Q_1$ as $P$.
The next two properties are the consequence of the definition of $Q_k$, equation~\eqref{lexwidth} and the fact that $P$ has the least and the greatest element with $\width(P)=2$. For each $k\in\mathbb N$
  \begin{enumerate}[resume*]
  \item\label{Qmaxmin} $Q_k$ has a minimum vertex and a maximum vertex,
  \item $\width (Q_k) = 2^k$.
 \end{enumerate}

\begin{claim}
For each $k \in \mathbb N$, $Q_k \in \Forb (L_m)$.
\end{claim}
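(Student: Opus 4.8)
The plan is to induct on $k$, carrying along property~\ref{Qmaxmin} that each $Q_k$ has a minimum and a maximum element. The base case $Q_0$ is a single vertex, which has no two incomparable elements and hence contains no $L_m$ (as $m>1$). For the inductive step it suffices to prove the general statement: \emph{if $P,Q\in\Forb(L_m)$ and $Q$ has a least element $\hat 0_Q$ and a greatest element $\hat 1_Q$, then $P\cdot Q\in\Forb(L_m)$}. I would apply this with $Q=Q_k$, using $P\in\Forb(L_m)$ from \eqref{PForb} and $Q_k\in\Forb(L_m)$ from the inductive hypothesis; since $Q_{k+1}=P\cdot Q_k$ again has a least and greatest element by~\ref{Qmaxmin}, the induction continues. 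So suppose for contradiction that $P\cdot Q$ contains an induced $L_m$ with lower leg $x_1<\dots<x_m$, upper leg $y_1<\dots<y_m$, rungs $x_i<_{P\cdot Q}y_i$, and $y_i\parallel_{P\cdot Q}x_j$ for $i<j$. Writing $x_i=(a_i,b_i)$, $y_i=(c_i,d_i)$ and projecting onto $P$, the chain and rung conditions give $a_1\le_P\dots\le_P a_m$, $c_1\le_P\dots\le_P c_m$, and $a_i\le_P c_i$. The key observation, immediate from the definition of the lexicographic order, is that two elements whose first coordinates are \emph{strictly} $\le_P$-comparable are comparable in $P\cdot Q$; hence each incomparability $y_i\parallel x_j$ forces $c_i=a_j$ or $c_i\parallel_P a_j$. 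Applying this to the extreme pair $y_1\parallel x_m$ gives a dichotomy: either $c_1\parallel_P a_m$ or $c_1=a_m$.

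In the first case, $c_1\parallel_P a_m$, the projection already realizes an induced $L_m$ in $P$. Indeed, the monotone squeeze $c_1\le_P c_i$ and $a_i\le_P a_m$ shows that any equality $a_i=c_i$ or any relation $c_i=a_j$ with $i<j$ would give $c_1\le_P a_m$, contradicting $c_1\parallel_P a_m$; and an equality $a_i=a_{i+1}$ (resp.\ $c_i=c_{i+1}$) would, via \eqref{comppar} applied to the two equal–first-coordinate points together with the witness $y_i$ (resp.\ $x_{i+1}$), force $c_i=a_i$ (resp.\ $a_{i+1}=c_{i+1}$) and hence again $c_1\le_P a_m$. Ruling all of these out yields strict chains $a_1<_P\dots<_P a_m$ and $c_1<_P\dots<_P c_m$, strict rungs $a_i<_P c_i$, and $c_i\parallel_P a_j$ for all $i<j$, i.e.\ an induced $L_m$ in $P$, contradicting $P\in\Forb(L_m)$.

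In the second case, $c_1=a_m=:p$, the same ``strict first coordinate forces comparability'' principle collapses the ladder into the fibre over $p$. For $j\ge 2$ we have $a_j\le_P a_m=p=c_1$, which forces $a_j=p$ (otherwise $x_j<_{P\cdot Q}y_1$), and dually $c_i=p$ for $i\le m-1$. Thus $x_2,\dots,x_m$ and $y_1,\dots,y_{m-1}$ all lie in $\{p\}\times Q\cong Q$, with second coordinates satisfying $b_2<_Q\dots<_Q b_m$, $d_1<_Q\dots<_Q d_{m-1}$, $b_i<_Q d_i$, and $d_i\parallel_Q b_j$ for $i<j$. This is an induced $L_m$ in $Q$ with at most two vertices missing: the bottom of the lower leg (when $a_1<_P p$) and the top of the upper leg (when $c_m>_P p$). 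I would supply any missing vertex using $\hat 0_Q$ and $\hat 1_Q$, which are comparable to everything in the correct direction and, being global extrema, are distinct from all the $b$'s and $d$'s; this produces an induced $L_m$ in $Q$, contradicting $Q\in\Forb(L_m)$.

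The step I expect to be the main obstacle is this second case: unlike the first, the projected ladder does not itself live in either factor, so one must both establish the collapse into a single fibre and repair the one or two end-rungs that protrude from it. This repair is exactly where the hypothesis that $Q$ has a least and a greatest element (property~\ref{Qmaxmin}) is indispensable, which is why that property must be threaded through the induction rather than discarded after the base case.
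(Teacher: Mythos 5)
Your proof is correct, and its skeleton matches the paper's: induct on $k$, and show that an $m$-ladder in the product either projects to an induced $m$-ladder in $P$, or collapses (up to its two extreme vertices) into a single fibre isomorphic to the second factor, where it is completed to an $m$-ladder using $\hat 0$ and $\hat 1$ -- exactly the paper's endgame, and like the paper you must thread property~\ref{Qmaxmin} through the induction to have those extrema available. Where you genuinely differ is the pivot of the case analysis and the mechanics of the collapse. The paper splits on whether all $2m$ first coordinates are pairwise distinct: if so, the projection is immediately an induced ladder in $P$; if not, it bootstraps fibre membership from a single coincidence using \eqref{comppar} and \eqref{between} (two lower-leg vertices in a fibre force an upper-leg vertex into it, then $(a_m,u_m)$ and $(b_1,v_1)$, then every vertex except the two ends). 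You instead split on the relation between $c_1$ and $a_m$, which by $y_1\parallel x_m$ must be equality or incomparability. This redistributes the work: your incomparable case costs more than the paper's distinct case (you must rule out coincidences among the projections via the squeeze $c_1\le_P c_i$, $a_j\le_P a_m$ together with \eqref{comppar}), but your equal case is cleaner than the paper's shared-coordinate case, since the collapse $a_j=p$ for $j\ge 2$ and $c_i=p$ for $i\le m-1$ follows in one line from the lexicographic definition, with no bootstrapping. Packaging the step as a standalone lemma -- the lexicographic product of two $L_m$-free posets is $L_m$-free provided the right factor has a least and a greatest element -- is a nice modularization the paper leaves implicit. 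One wording caveat: the adjoined $\hat 0$ and $\hat 1$ are distinct from the $b$'s and $d$'s not ``because they are global extrema'' per se, but because every vertex of the truncated ladder is incomparable in $Q$ to some other vertex of it (e.g.\ $b_j\parallel_Q d_{j-1}$ and $d_i\parallel_Q b_{i+1}$), hence cannot be comparable to everything; this is the same level of detail the paper itself leaves to the reader.
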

\begin{proof}
We will use induction on $k$.
For our bases, we see $k=0$ is trivial and $k=1$ is established by~\eqref{PForb}.
Take $k > 1$ and suppose the inductive hypothesis holds for all smaller cases.
Assume $L$ is an $m$-ladder in $Q_k$ with
the lower leg $(a_1, u_1) <_{Q_k} (a_2 , u_2 ) <_{Q_k} \ldots <_{Q_k} (a_m, u_m)$
and the upper leg $(b_1, v_1) <_{Q_k} (b_2, v_2) <_{Q_k} \ldots <_{Q_k} (b_m, v_m)$.
If all vertices of $L$ are pairwise different in the first coordinate,
then these vertices would induce an $m$-ladder in $P$, which violates~\eqref{PForb}.
Hence, at least two vertices of $L$ share a first coordinate, say $p \in P$.
Let $Q' = \{ (p,q) : q \in Q_{k-1 } \}$ and note that $Q'$ and $Q_{k-1}$ are isomorphic.
Let $\hat 0$ and $\hat 1$ to be the minimum and the maximum, respectively, vertices of $Q'$ (which exist by~\ref{Qmaxmin}).

Assume for a while, $Q'$ contains two vertices of the lower leg of $L$, i.e.,
there are $i<j \in [m]$ so that $(a_i,u_i),(a_j,u_j) \in Q'$ with $a_i=a_j=p$.
From the definition of a ladder, we know $(a_i,u_i) \le_{Q_k} (b_i,v_i) \parallel_{Q_k} (a_j,u_j)$.
By~\eqref{comppar} we have $b_i=p$ and thus $Q'$ contains $(b_i,v_i)$, a vertex of the upper leg of $L$.
For similar reasons, if $Q'$ contains two vertices of the upper leg of $L$, then it has to have one of the lower leg of $L$.
Therefore, there are $(a_i,u_i),(b_j,v_j) \in Q'$, vertices of the lower and the upper leg of $L$, respectively.
We see $(a_i,u_i) \le_{Q_k} (a_m,u_m) \parallel_{Q_k} (b_j,v_j)$ (if $j<m$)  or $(a_i,u_i) \le_{Q_k} (a_m,u_m) \le_{Q_k} (b_j,v_j)$ (if $j =m$).
In the former case we use~\eqref{comppar} and in the latter case~\eqref{between} to show $(a_m,u_m) \in Q'$.
Similarly, $(a_i,u_i) \parallel_{Q_k} (b_1,v_1) \le_{Q_k} (b_j,v_j)$ (if $i > 1$) or $(a_i,u_i) \le_{Q_k} (b_1,v_1) \le_{Q_k} (b_j,v_j)$ (if $i=1$).
Again, using~\eqref{comppar} or~\eqref{between}, we have $(b_1,v_1) \in Q'$.

For any vertex $(r,s)$ in $L$ so that $(r,s) \notin \{ (a_1,u_1), (b_m,v_m) \}$,
we have either $(b_1,v_1) \le_{Q_k} (r,s) \parallel_{Q_k} (a_m,u_m)$ or $(b_1,v_1) \parallel_{Q_w} (r,s) \le_{Q_k} (a_m,u_m)$.
By~\eqref{comppar} we deduce $(r,s) \in Q'$.
Finally, the vertices
$$
\{ \hat 0 , (a_2,u_2) , (a_3,u_3) , \ldots , (a_m,u_m) , (b_1,v_1) , (b_2,v_2) , \ldots , (b_{m-1},v_{m-1}) , \hat 1 \} \subseteq Q'
$$
induce an $m$-ladder in $Q'$, which contradicts the inductive hypothesis, proving the claim.
\end{proof}

\begin{claim} \label{Precur}
$\chi_{\FF} (Q_{k+1}) \ge (m-1) \chi_{\FF} (Q_{k})$.
\end{claim}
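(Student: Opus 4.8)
The plan is to produce a Grundy coloring of $Q_{k+1}$ with $(m-1)\chi_{\FF}(Q_k)$ colors and then invoke Lemma~\ref{PgrP}, which identifies $\chi_{\FF}$ with the largest number of colors attainable by a Grundy coloring. First I would fix an $N$-Grundy coloring $\gr{g}$ of $Q_k$ with $N=\chi_{\FF}(Q_k)$, together with an $M$-Grundy coloring $\gr{h}$ of $P$ with $M=\chi_{\FF}(P)$; both exist by Lemma~\ref{PgrP}, and we recorded $M\ge m-1$ when $P$ was introduced. Since $Q_{k+1}=P\cdot Q_k$, I would then define the candidate coloring $\gr{f}\colon Q_{k+1}\to[MN]$ by
\[
\gr{f}(p,q)=(\gr{h}(p)-1)N+\gr{g}(q),
\]
so that the color of $(p,q)$ encodes the pair $(\gr{h}(p),\gr{g}(q))$ and is injective as a function of that pair, with range exactly $[MN]$.

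The core of the argument is verifying the three conditions of Definition~\ref{Pgr} for $\gr{f}$. Condition~\ref{G:2} is immediate from surjectivity of $\gr{g}$ and $\gr{h}$. For~\ref{G:1}, two vertices $(p_1,q_1),(p_2,q_2)$ sharing an $\gr{f}$-color satisfy $\gr{h}(p_1)=\gr{h}(p_2)$ and $\gr{g}(q_1)=\gr{g}(q_2)$; since color classes of $\gr{h}$ and $\gr{g}$ are chains, $p_1,p_2$ are comparable, and if $p_1=p_2$ then $q_1,q_2$ are comparable, so the two vertices are comparable in $P\cdot Q_k$ by the definition of the lexicographic order. The work is in~\ref{G:3}: given $(p,q)$ with $\gr{f}(p,q)=(i-1)N+j$ and a smaller color $c'=(i'-1)N+j'$, I would produce an incomparable witness of color $c'$. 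If $i'=i$ and $j'<j$, then $\gr{g}$ supplies some $q'\in I_{Q_k}(q)$ with $\gr{g}(q')=j'$, giving $(p,q')\parallel_{P\cdot Q_k}(p,q)$ and $\gr{f}(p,q')=c'$. If $i'<i$, then $\gr{h}$ supplies some $p'\in I_P(p)$ with $\gr{h}(p')=i'$; choosing any $q'$ with $\gr{g}(q')=j'$ by surjectivity, the incomparability $p'\parallel_P p$ forces $(p',q')\parallel_{P\cdot Q_k}(p,q)$ irrespective of $q'$, again with color $c'$. These two cases cover every $c'<(i-1)N+j$.

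Hence $\gr{f}$ is an $MN$-Grundy coloring, and Lemma~\ref{PgrP} yields $\chi_{\FF}(Q_{k+1})\ge MN\ge(m-1)N=(m-1)\chi_{\FF}(Q_k)$. The one delicate point, and the step I expect to require the most care, is the incomparability bookkeeping in~\ref{G:3}: one must observe that when the first coordinates $p',p$ are incomparable in $P$ the pair is incomparable in $P\cdot Q_k$ regardless of the second coordinates (precisely the phenomenon recorded in~\eqref{comppar}), whereas when the first coordinates coincide the required incomparability must instead be furnished by the second coordinate through the witness property of $\gr{g}$. Everything else is a routine unwinding of the definitions of the lexicographic product and of Grundy colorings.
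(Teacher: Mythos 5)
Your proposal is correct and follows essentially the same approach as the paper: both construct the product coloring $(p,q)\mapsto(\text{color of }p-1)\cdot N+\text{color of }q$ on the lexicographic product and verify conditions \ref{G:1}--\ref{G:3} with the same two-case analysis for the witness property (equal first-coordinate colors versus strictly smaller), then invoke Lemma~\ref{PgrP}. The only cosmetic difference is that you work with $M=\chi_{\FF}(P)\ge m-1$ colors on $P$ rather than fixing an $(m-1)$-Grundy coloring outright.
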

\begin{proof}
We already know $P$ has an ($m-1$)-Grundy coloring, say $\gr{f}$.
Let $\gr{g}$ be a $n$-Grundy coloring of $Q_{k}$.
Define $\gr{h}: Q_{k+1} \rightarrow [(m-1)n]$ by $\gr{h}((p,q)) = (\gr{f}(p)-1)n+\gr{g}(q)$.
We will show $\gr{h}$ is an ($(m-1)n$)-Grundy coloring of $Q_{k+1}$.
For that we need to prove~\ref{G:1}-\ref{G:3} of Definition~\ref{Pgr}.

It is easy to check that a function $(f,g) \to (f-1)n+g$ is a bijection between $[m-1]\times[n]$ and $[(m-1)n]$.
Since $\gr{f}$ and $\gr{g}$ are surjective, then also $\gr h$ must be surjective.
Thus, $\gr h$ satisfies~\ref{G:2}.
To show~\ref{G:1} suppose $\gr{h}((p,q))=\gr{h}((r,s))$.
This implies that $\gr{f}(p) = \gr{f}(r)$ and $\gr{g}(q) = \gr{g}(s)$.
By~\ref{G:1} of $\gr f$ and $\gr g$,
two pairs of vertices $p,r$ and $q,s$ are comparable respectively in $P$ and in $Q_k$.
% vertices $p$ and $r$ are comparable in $P$
% and vertices $q$ and $s$ are comparable in $Q_k$.
Therefore, by the definition of the lexicographical product,
vertices $(p,q)$ and  $(r,s)$ are comparable in $Q_{k+1}$ and condition~\ref{G:1} holds for $\gr{h}$.

Consider $(r,s)\in Q_{k+1}$ so that $\gr{h}((r,s))=j>1$ and take any $i<j$.
We will show $(r,s)$ has an $i$-witness in $Q_{k+1}$ which will prove~\ref{G:3}.
There are unique integers $c \in [m-1]$ and $d \in [n]$ so that $j=(c-1)n+d$ and $\gr{f}(r)=c$, $\gr{g}(s)=d$.
Similarly, we can find $a \in [m-1]$ and $b \in [k]$ so that $i=(a-1)n+b$.
As $i<j$, we must have $a \le c$.

Suppose $a=c$, then $b<d$.
As $\gr{g}$ satisfies~\ref{G:3}, there is some $q\in Q_{k}$ so that $\gr{g}(q)=b$ and $q \parallel_{Q_{k}} s$.
By the definition of lexicographical product, $(r,q) \parallel_{Q_{k+1}} (r,s)$.
Observe $ \gr{h} ((r,q)) =  i$ and then $(r,q)$ is the desired witness.

The case  $a<c$ is similar.
This time we use~\ref{G:3} of $\gr{f}$ to get $p \in P$ so that $\gr{f}(p)=a$ and $p \parallel_P r$.
Take any $q\in Q_k$ so that $\gr g (q) = b$ ($q$ exists by~\ref{G:2} of $\gr g$).
Again, by the definition of lexicographical product, $(p,q) \parallel_{Q_{k+1}} (r,s)$.
Finally, as $\gr{h}((p,q))=i$, we deduce $(p,q)$ is the desired witness in this case.
\end{proof}
Claim~\ref{Precur} with $\chi_{\FF} (Q_0) =1$ implies $\chi_{\FF}(Q_k) \ge (m-1)^k$.
Note that $\width(Q_{\lfloor \lg w \rfloor})$ could be less then $w$.
But we can always add some isolated vertices to $Q_{\lfloor \lg w \rfloor}$
to get width $w$ poset $Q'$ so that $\chi_{\FF}(Q') \ge \chi_{\FF} (Q_{\lfloor \lg w \rfloor})$.
This finally shows
$$
\val_{\FF} (L_m,w) \ge \chi_{\FF}(Q_{\lfloor \lg w \rfloor}) \ge (m-1)^{\lfloor \lg w \rfloor} \ge \frac{w^{\lg (m-1)}}{m-1}.
$$
\end{proof} %%%%%%%%%%%%%% ladder lower bounds

Lemmas~\ref{regladderlower} and \ref{laddnbound} show that the upper bound of $\val (\mathcal P_w)$ cannot be pushed below $w^{\lg w}$ using our current methods.

\section{Concluding Remarks}
\label{sec:concluding}

Although we have improved the upper bound for $\val (\mathcal P_w)$, our current methods cannot bring it down to a polynomial bound without some major changes.
Perhaps improvements in the understanding of regular posets could lead us to a subfamily of more interesting forbidden substructures.
We could also examine online coloring algorithms other than First-Fit to reduce the number of colors used on the family $\Forb (L_m)$.

We may look beyond the scope of $\val (\mathcal P_w)$.
So far, the reduction to regular posets has only been studied on general posets.
We might ask what the results of Procedures (1) and (2) are when we start with a poset from $\Forb (Q)$ (for some poset $Q$).
It is interesting to ask what analogues of the inequality (\ref{eqM}) could be built.
For instance, could an analogue for cocomparability graphs be created?
Already, Kierstead, Penrice, and Trotter~\cite{KPT} have shown that a cocomparability graph can be
colored online using a bounded number of colors.
However, this bound is so large that it was not computed.
Perhaps methods similar to the reduction to regular posets could be created.

\end{document}